\newcommand{\set}[1]{\left\{ #1 \right\}}
\newcommand{\paren}[1]{\left( #1 \right)}
\newcommand{\sqparen}[1]{\left[ #1 \right]}
\newcommand{\abs}[1]{\left\vert #1 \right\vert}
\newcommand{\floor}[1]{\left\lfloor #1 \right\rfloor}
\newcommand{\Prob}[1]{\Pr\left( #1 \right)}
\newcommand{\Ex}[2][]{\mathbb{E}_{#1\!}\sqparen{#2}}
\newcommand{\Var}[1]{\operatorname{Var}\sqparen{#1}}
\newcommand{\Cov}[2]{\operatorname{Cov}\sqparen{#1, #2}}
\newcommand{\Unif}[2]{\operatorname{U}\!\paren{#1, #2}}
\newcommand{\Normal}[3][]{
\ifx\hfuzz#1\hfuzz 
\mathcal{N}\paren{#2, #3}
\else
\mathcal{N}\paren{#1; #2, #3}
\fi
}
\newcommand*\widebar[1]{%
  \hbox{%
    \vbox{%
      \hrule height 0.5pt % The actual bar
      \kern0.25ex%         % Distance between bar and symbol
      \hbox{%
        \kern-0.1em%      % Shortening on the left side
        \ensuremath{#1}%
        \kern-0.1em%      % Shortening on the right side
      }%
    }%
  }%
}
\newcommand{\comment}[1]{}      %will not be shown on hard copy
\newcommand{\cardProc}{CAR\_EST\_PROC}
\newcounter{procCounter}
\newtheorem{theorem}{Theorem}
\newtheorem{corollary}[theorem]{Corollary}
\newtheorem{lemma}[theorem]{Lemma}
\newtheorem{thm}{Theorem}
\newtheorem{algorithm2}{Algorithm}
\newtheorem{procedure}[procCounter]{Procedure}
\begin{document}

\newcommand*{\QEDA}{\hfill\ensuremath{\blacksquare}}

\title{MTS Sketch for Accurate Estimation of Set-Expression Cardinalities from Small Samples}
\author{
Reuven Cohen~~~Liran Katzir~~~Aviv Yehezkel\\
Department of Computer Science\\
Technion\\
Haifa 32000, Israel\\
}
\date{\today}
\maketitle

%%%%%%%%%%%%%%%%%%% abstract
%%%%%%%%%%%%%%%%%%%
\begin{abstract}

Sketch-based streaming algorithms allow efficient processing of big data. These algorithms use small fixed-size storage to store a summary (``sketch'') of the input data, and use probabilistic algorithms to estimate the desired quantity.
However, in many real-world applications it is impractical to collect and process the entire data stream; the common practice is thus to sample and process only a small part of it.
While sampling is crucial for handling massive data sets, it may reduce accuracy. In this paper we present a new framework that can accurately estimate the cardinality of any set expression between any number of streams using only a small sample of each stream. The proposed framework consists of a new sketch, called Maximal-Term with Subsample (MTS), and a family of algorithms that use this sketch. An example of a possible query that can be efficiently answered using the proposed sketch is, How many distinct tuples appear in tables $T_1$ and $T_2$, but not in $T_3$?  
The algorithms presented in this paper answer such queries accurately, processing only a small sample of the tuples in each table and using a constant amount of memory.
Such estimations are useful for the optimization of queries over very large database systems.
We show that all our algorithms are unbiased, and we analyze their asymptotic variance.

\end{abstract}

%%%%%%%%%%%%%%%%% intro
%%%%%%%%%%%%%%%%%

\section{Introduction} \label{sec:intro}

Consider a very long stream of elements $x_1, x_2, x_3, \ldots,$ with repetitions. 
Finding the number $n$ of distinct elements, known as ``the cardinality estimation problem,'' is a well-known problem in numerous applications.
The cardinality estimation problem can be generalized to set expressions over multiple streams, which yields many important applications. 
As an example, consider three large relational databases, $T_1, T_2$ and $T_3$, with a shared field $f$. Suppose we are interested in processing the query $f_1\cap f_2 \setminus f_3$, where $f_i$ is the stream of tuples in the field $f$ of $T_i$. The database system needs to determine the best (low-cost) plan for processing this query. 
To this end, every database system contains a query optimizer. The cost of a plan is usually defined according to its CPU and I/O overhead, and should be estimated according to the input/output cardinalities of each operator in the plan.
Thus, accurate cardinality estimation of set expressions over table fields in one scan and using fixed memory is crucial for query optimizations.

It is easy to use linear $O(n)$ space to produce an accurate solution to the cardinality problem. This can be done, for example, by comparing the value of a newly encountered element, $x_i$, to every (stored) value encountered so far. If the value of $x_i$ has not been seen before, it is stored and counted. 
However, for a wide range of application domains, the data set is very large, making linear space algorithms impractical. 

The challenge of processing large volumes of data that arrive at high speed has led the research community to develop new families of algorithms that work over continuous streams and produce accurate real-time estimations while guaranteeing: (a) low processing time per element, (b) fixed-size memory, which is sub-linear in the length of the stream, and (c) high estimation quality. The two main families are:
\begin{itemize}
\item Sub-linear space algorithms, also known as sketch-based streaming algorithms. These algorithms typically use a sketch, namely, a small fixed-size storage that stores a summary of the input data. Then, they employ a probabilistic algorithm on the sketch, which estimates the desired quantity \cite{Cosma:2011, Metwally:2008}.
\item Sub-linear time algorithms. These algorithms are allowed to see only a small portion of the input. A common practice is to use sampling and process only the sampled stream elements. For a recent surveys, see \cite{Sohler10, Rubinfeld11}.
\end{itemize}
As depicted in Figure \ref{fig:us}, the algorithms presented in this paper satisfy both, i.e., they require both sub-linear space and sub-linear time.

\begin{figure}[tbp]
\begin{center}
\epsfxsize=0.33\textwidth \epsffile{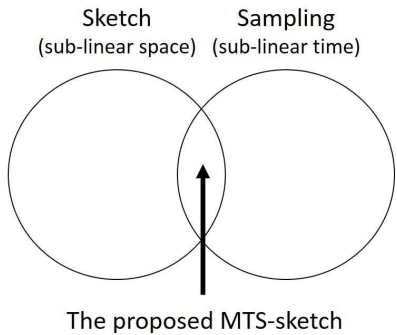}
\caption{The relationship between the two families and the proposed new MTS-sketch.}
\label{fig:us}
\end{center}
\end{figure}

Several sketch-based algorithms have been proposed for the cardinality estimation problem of both single and multiple streams \cite{Beyer07, Cosma:2011, CohenK08, Dasu02, Flajolet2007, Giroire2009, Lumbroso2010}. A common scheme, called min/max sketch, uses a hash function to map every element $x_i$ to $\Unif{0}{1}$, and then remembers only the minimum/maximum hashed value. To reduce the estimation variance, min/max sketches use $m$ different hash functions in parallel and keep the minimal/maximal hash value for each.

\ifthenelse{\boolean{doubleBlind}}
%%%%%%% double blind
{In a previous paper \cite{Cohen15}, a sketch-based algorithm with sampling was proposed for estimating the cardinality of a single stream. The present paper extends \cite{Cohen15} for estimating the cardinality of set expressions between multiple streams.
It proposes a family of algorithms that can accurately estimate the cardinality of any set expression (intersection, union and set difference) on any number of streams, using only a small sample of each stream. We prove that all our algorithms are unbiased, and we analyze their asymptotic variance.}
%%%%%%% not double
{In a previous paper \cite{Cohen15}, we proposed a sketch-based algorithm with sampling for estimating the cardinality of a single stream. The present paper extends \cite{Cohen15} for estimating the cardinality of set expressions between multiple streams.
It proposes a family of algorithms that can accurately estimate the cardinality of any set expression (intersection, union and set difference) on any number of streams, using only a small sample of each stream. We prove that all our algorithms are unbiased, and we analyze their asymptotic variance. 
}

The rest of this paper is organized as follows. Section \ref{sec:related} discusses previous work. Section \ref{sec:sketch} reviews the Good-Turing frequency estimation and Jaccard similarity estimation techniques, which are used to develop our new algorithms. Section \ref{sec:algo} presents the MTS sketch and uses it to develop a family of algorithms that can accurately estimate the cardinality of any set expression between two streams using only a small sample of each. Section \ref{sec:algok} extends these algorithms to $k>2$ streams. Section \ref{sec:sim} presents simulation results and Section \ref{sec:conclusion} concludes the paper. 
A detailed analysis of the asymptotic bias and variance of the proposed algorithms is provided in the Appendix.

%%%%%%%%%%%%%%%%%%% related
%%%%%%%%%%%%%%%%%%%

\section{Related Work} \label{sec:related}

Several works address the cardinality estimation problem of a single stream \cite{Cosma:2011, CohenK08, Flajolet2007, Giroire2009, Lumbroso2010, Metwally:2008} and propose sketch-based streaming algorithms for solving it. 
As already indicated, these algorithms are efficient because they make only one pass on the data stream, and because they use a fixed and small amount of storage. 

The various techniques can be classified according to the data sketch they store for future processing. For example, a min/max sketch estimator stores only the minimum/maximum hashed value. The intuition behind such estimators is that each sketch carries information about the cardinality of the stream. For example, when a hash function $h$ is used in order to associate every element $e_j$ with a uniform random variable, namely, $h(e_j) \sim \Unif{0}{1}$, the expected minimum value of $h(e_1),h(e_2), \ldots, h(e_n)$ is $1/(n+1)$. However, when only one hash function is used, the variance of the estimator is infinite. Thus, multiple different hash functions are used; the estimator keeps the minimum/maximum value for each, and then averages the results \cite{Chassaing2006, CohenK08, Giroire2009, BarYossef2002, Lumbroso2010}.

Another cardinality estimation technique is the bit pattern estimator, which keeps the highest position of the leftmost (or rightmost) ``1'' bit in the binary representation of the hash values \cite{Cosma:2011, Flajolet2007}. Bottom-$m$ sketches~\cite{CohenK08} are yet another technique. A generalization of min sketches, they maintain the $m$ minimal values, where $m \geq 1$. A comprehensive overview of the various techniques is given in~\cite{Cosma:2011, Metwally:2008}.

Sketch-based algorithms were also proposed for multiple streams \cite{Beyer07,Dasu02,Ganguly04}. An estimation of $n_{A\cup B}$, namely, the cardinality of $A\cup B$, can be found using any min/max sketch estimator for the cardinality estimation problem \cite{Gibbons16}. An estimation of $n_{A \cap B}$ can then be found using the inclusion-exclusion principle \cite{Cormode12}. In \cite{Beyer07,Dasu02,Ganguly04} it is proposed to estimate the Jaccard similarity and then use it to estimate the intersection cardinality.
In \cite{Beyer07,Ganguly04} the estimators are generalized to set expressions between more than two streams.

All the above sketch-based algorithms were designed to process the entire stream; as such, they do not use sampling. However, real-world applications that have to process large volumes of monitored data make it impractical to collect and analyze the entire input stream. Rather, the common practice is to sample and process only a small part of the stream elements. For example, routers use sampling techniques to achieve scalability \cite{sFlow1}.

Although sampling techniques provide greater scalability, they also make it more difficult to infer the characteristics of the original stream. One of the first relevant works is the Good-Turing frequency estimation, a statistical technique proposed by Alan Turing and his assistant I.J. Good, for estimating the probability of encountering a hitherto unseen element in a stream, given a set of past samples. For a recent paper on the Good-Turing technique, see \cite{Gale95}.

Many early works in the database literature tried to address the problem of estimating the cardinality from small samples; until the mid 1990s, this was the prevalent approach \cite{Gibbons16}. A sample of the data was collected and sophisticated estimators applied on the distributions of the values (see \cite{Cormode12,Gibbons16} for relevant references). 
However, because these estimators were sensitive to the order of the elements and their repetition pattern, they failed to provide accurate estimates (see pages 19-21 in \cite{Cormode12}). 

In \cite{VV11,VV13}, the authors present an estimator for the cardinality and entropy of a stream using $O(n/log{n})$ samples. Their main idea is to create a frequency histogram ``fingerprint'' of all sampled elements, and then run a linear program that approximates the real distribution in the full stream. However, creating a fingerprint requires exact mapping and counting of all the distinct elements in the given sample, whose length is $O(n/log{n})$. This becomes difficult in most real-world applications, as the number of distinct elements in the sample can be very large. The algorithm proposed in the present paper requires significantly less processing of only a small portion of the sampled stream. 

Several other works address the problem of statistical inference from samples in other computer science applications. For example, the detection of heavy hitters, namely, elements that appear many times in the stream, is studied in \cite{Bhatt07}. The authors propose to keep track of the volume of data that has not been sampled. Then, a new element is skipped only when its effect on the estimation is ``not too large." The case where the elements are network data flows has also been addressed. There, the heavy hitters (large flows) are called elephants. The accuracy of detecting elephant flows is studied in \cite{Mori07} and \cite{Mori04}. In \cite{ECohen14}, the authors study the problem of estimating the size of subpopulations of flows from a given sample. They examine some known packet sampling schemes and design unbiased sketch-based estimators for each.

%%%%%%%%%%%%%%%% sketch
%%%%%%%%%%%%%%%%

\section{Preliminaries: Good-Turing Frequency Estimation and Jaccard Similarity Estimation} \label{sec:sketch}

The Good-Turing frequency estimation technique is useful in many language-related tasks where the problem is to determine the probability that a word appears in a document. Let $X=\set{x_1,x_2,x_3,\ldots}$ be a stream of elements, and $E=\set{e_1,e_2,\ldots,e_n}$ be the set of all different elements, such that $x_i \in E$. 
Suppose that we want to estimate the probability $\pi(e_j)$ that a randomly chosen element from $X$ is $e_j$. A naive approach is to choose a sample $Y = \set{y_1,y_2,\ldots,y_l}$ of $l$ elements from $X$, and then to let $\pi(e_j) = \frac{\#(e_j)}{l}$, where $\#(e_j)$ denotes the number of appearances of $e_j$ in $Y$. However, this approach is inaccurate, because for each element $e_j$ that does not appear in $Y$ even once (an ``unseen element"), $\pi(e_j)=0$. 

Let $E_i=\set{e_j | \#(e_j)=i}$ be the set of elements that appear exactly $i$ times in the sample $Y$. Thus, $\sum{\abs{E_i} \cdot i}=l$. Good-Turing frequency estimation claims that $\widehat{P_i}=(i+1)\frac{\abs{E_{i+1}}}{l}$ is a consistent estimator for the probability $P_i$ that an element of $X$ appears in the sample $i$ times.

For the case where $i=0$, we get from Good-Turing that $\widehat{P_0}=\abs{E_1}/l$. In other words, the hidden mass $P_0$ can be estimated using the relative frequency of the elements that appear exactly once in the sample $Y$. For example, if $1/10$ of the elements in $Y$ appear in $Y$ only once, then approximately $1/10$ of the elements in $X$ are unseen elements, namely, they do not appear in $Y$ at all. 

Jaccard similarity is defined as: $\rho(A,B)=\frac{\abs{A \cap B}}{\abs{A \cup B}}$, where $A$ and $B$ are two finite sets. Its value ranges between $0$, when the two sets are completely different, and $1$, when they are identical.
An efficient and accurate estimate of $\rho$ can be computed as follows \cite{Broder97new}. First, each item in $A$ and $B$ is hashed into $(0,1)$. Then, the maximal value of each set is taken as a sketch that represents the whole set. The probability that the sketches of $A$ and $B$ are equal is exactly $\rho(A,B)$ \cite{Broder97new}. When only one hash function is used, the variance of the estimate of $\rho(A,B)$ is infinite. Thus, $m$ hash functions are used\footnote{Better performance can be attained if, instead of $m$ hash functions, only two hash functions with stochastic averaging are used \cite{Flajolet:1985}.}, and the sketch representing each set is actually a vector of $m$ maximal values. 
We can state this formally as follows. Given a set $A = \set{a_1,a_2,\ldots,a_p}$ and $m$ different hash functions $h_1,h_2,\ldots,h_m$, the maximal hash value for the $j$'th hash function is $x_A^j=\max_{i=1}^{p}\set{h_j(a_i)} \text{, } 1\leq j \leq m$. Then, the sketch of $A$ is $X_A=\set{x_A^1,x_A^2,\ldots,x_A^m}$, and the sketch of $B$ is $X_B=\set{x_B^1,x_B^2,\ldots,x_B^m}$.
The two sketches can then be used to estimate the Jaccard similarity of $A$ and $B$:
\begin{equation}
\label{eq:jacest}
\widehat{\rho(A,B)}=\frac{\sum_{j=1}^{m}I_{x_A^j==x_B^j}}{m} \text{,}
\end{equation}
where the indicator variable $I_{x_A^j==x_B^j}$ is $1$ if $x_A^j=x_B^j$, and $0$ otherwise.

The Jaccard similarity can be generalized to set difference in the following way \cite{Dasu02}:
\begin{equation}\label{eq:otherJacs}
\rho(A>B)=\frac{\abs{A \setminus B}}{\abs{A \cup B}} \:\: \text{and,} \:\: \rho(A<B)=\frac{\abs{B \setminus A}}{\abs{A \cup B}} \text{.}
\end{equation}
Thus, the estimator from Eq. (\ref{eq:jacest}) can be generalized in the same way:
\begin{equation}
\label{eq:jacminus}
\widehat{\rho(A>B)}=\frac{\sum_{j=1}^{m}I_{x_A^j>x_B^j}}{m} \text{,}
\end{equation}
where the indicator function $I_{x_A^j>x_B^j}$ is $1$ if $x_A^j>x_B^j$, and $0$ otherwise. A similar estimation can be performed for $\rho(A<B)$. 

To shorten our notation, for the rest of the paper we use $\rho$, $\rho_>$ and $\rho_<$ to indicate $\rho(A,B)$, $\rho(A>B)$ and $\rho(A<B)$ respectively.

%%%%%%%%%%%%%%%%%%% algo
%%%%%%%%%%%%%%%%%%%

\section{MTS-based Streaming Algorithms for Set-Expression Cardinality Estimation of Two Streams} \label{sec:algo}

In this section we present the MTS sketch and use it to develop a family of algorithms that can accurately estimate the cardinality of any set expression between two streams using only a small sample of each. In Section \ref{sec:algok}, these algorithms are extended for $k>2$ streams. Table \ref{table:notations} shows some of the notations used for the rest of the paper.

\begin{table}[htb!]
\centering 
	\begin{tabular}{|c|c|} \hline
   notation & meaning   \\\hline
	$P_0^T$ & probability for unseen elements in $T$  \\\hline
$P_1^T$ & prob. for elements to appear exactly once in $T$  \\\hline
$P_{0,1}^T$ & $2P_0^T(1-P_0^T) + P_1^T$  \\\hline
	$P_0^\cup$ & probability for unseen elements in $A \cup B$  \\\hline
	$S_{T}$ & sampled stream of $T$  \\\hline
	$S_{\cup}$ & sampled stream of $A\cup B$  \\\hline
	$d_{W}$ & $\frac{\abs{S_{W}}}{\abs{S_{\cup}}}$, where $W$ is the stream $A$ or $B$ \\\hline
	$X_{T}$ & MTS maximal hash values of $T$  \\\hline
	$U_{T}$ & MTS subsample of $T$ \\\hline
	$\rho$ & $\rho(A,B)$ \\\hline
	$\rho_>$ & $\rho(A>B)$ \\\hline
	$\rho_<$ & $\rho(A<B)$ \\\hline
	\end{tabular}
	\caption{Notations ($T$ represents any stream, e.g., $A$, $B$, $A \cup B$, etc.)} 
	\label{table:notations}
\end{table}

\subsection{The MTS Sketch}

\ifthenelse{\boolean{doubleBlind}}
%%%%%% double blind
{In a previous paper \cite{Cohen15}, a generic scheme that combines a sampling process with a cardinality estimation procedure of a single stream was presented. The scheme in \cite{Cohen15} consists of two steps: (a) cardinality estimation of the sampled stream using any known cardinality estimator; (b) estimation of the sampling ratio, namely, the factor by which the cardinality of the sampled stream should be multiplied in order to estimate the cardinality of the full stream. The main idea was to store a small fixed-size subsample of the sampled stream and use it to estimate the probability of unseen elements using the Good-Turing technique.}
%%%%%% not double
{In a previous paper \cite{Cohen15} we presented a generic scheme that combines a sampling process with a cardinality estimation procedure of a single stream. The scheme in \cite{Cohen15} consists of two steps: (a) cardinality estimation of the sampled stream using any known cardinality estimator; (b) estimation of the sampling ratio, namely, the factor by which the cardinality of the sampled stream should be multiplied in order to estimate the cardinality of the full stream. The main idea was to store a small fixed-size subsample of the sampled stream and use it to estimate the probability of unseen elements using the Good-Turing technique.}

In this paper we generalize the above mentioned scheme to set expressions between multiple streams. The main idea is to maintain, as part of the sketch of each stream, a small fixed-size subsample of the sampled stream, and use this subsample for estimating the probability of unseen elements. 
To this end, an MTS sketch stores two data structures for each sampled stream:
\begin{itemize}
\item $\text{MTS}_1$: The maximal hash value for each hash function: $H^+ = h_1^+,h_2^+,\ldots,h_m^+$.
\item $\text{MTS}_2$: A small fixed-size uniform subsample $U$ of the sample stream (see Figure \ref{fig:relationship}), used for estimating the probability of unseen elements.
\end{itemize}

Let $X=\set{x_1,x_2,x_3,\ldots,x_s}$ be a full stream of elements, and $Y=\set{y_1, y_2, \ldots, y_l}$ be a sampled stream of $X$. Assume that the sampling rate is $P$, namely, $1/P$ of the elements of $X$ are \textbf{randomly} sampled into $Y$. The subsample $U$ (Figure \ref{fig:relationship}) can be generated using one-pass reservoir sampling \cite{Vitter85}, as follows. First, $U$ is initialized with the first $u$ elements of $Y$, namely, $y_1,y_2,\ldots,y_u$, and the elements are then sorted in decreasing order of their hash values. When a new element is sampled into $Y$, its hash value is compared to the current maximal hash value of the elements in $U$. If the hash value of the new element is smaller than the current maximal hash value of $U$, the new value is stored in $U$ instead of the element with the maximal hash value. Otherwise, the new element is ignored. After all of the elements of the sample $Y$ are treated, $U$ holds the $u$ elements whose hash values were minimum, and it can be considered as a uniform subsample of length $u$.

\begin{figure}[tbp]
\begin{center}
\epsfxsize=0.33\textwidth \epsffile{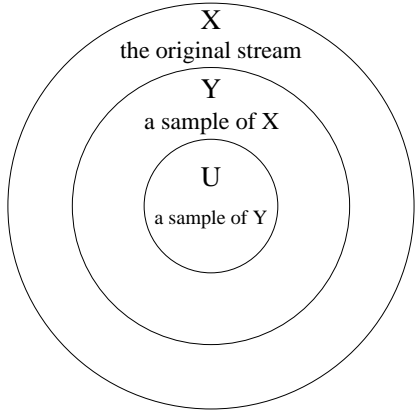}
\caption{The relationship between X, Y and U.}
\label{fig:relationship}
\end{center}
\end{figure}

Notice that the MTS sketch is additive, i.e., the MTS sketch of a union of streams can be computed directly from the MTS sketches of the streams. The next corollary summarizes this property for two streams, and it can be generalized for any $k>2$ streams as well:
\begin{corollary} \label{lemma:mtsunion} \ \\
Let $A$ and $B$ be two streams whose MTS sketches are:
\begin{itemize}
\item $\text{MTS}_1(A)$: $x_A^1,x_A^2,\ldots,x_A^m$ are the $m$ maximal hash values of $A$; $\text{MTS}_2(A) = U_A$ is a subsample of $A$ of length $u$.
\item $\text{MTS}_1(B)$: $x_B^1,x_B^2,\ldots,x_B^m$ are the $m$ maximal hash values of $B$; $\text{MTS}_2(B) = U_B$ is a subsample of $B$ of length $u$.
\end{itemize}
Then, the MTS sketch of $A \cup B$ is:
\begin{itemize}
\item[] $\text{MTS}_1(A\cup B) = H^+_{A\cup B} = h_1^+,h_2^+,h_m^+$, where $h_i^+ = \max{\set{x_A^i,x_B^i}}$.
\item[] $\text{MTS}_2(A\cup B) = U_{A \cup B}$ is the $u$ minimal hash values in $U_A \cup U_B$.
\end{itemize}
\end{corollary}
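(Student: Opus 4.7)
My plan is to prove the two components of the MTS sketch separately, since $\text{MTS}_1$ and $\text{MTS}_2$ are defined by independent data structures. The $\text{MTS}_1$ half is essentially a one-line identity, while the real (still quite short) content lies in verifying that the reservoir subsample $U_{A\cup B}$ can indeed be recovered from $U_A$ and $U_B$.

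For $\text{MTS}_1$, I would simply unfold the definition. By construction, the $j$-th coordinate of $\text{MTS}_1(A\cup B)$ is $\max_{e\in A\cup B} h_j(e)$. Splitting the max over the two sets,
\[
\max_{e\in A\cup B} h_j(e) \;=\; \max\{\max_{e\in A} h_j(e),\; \max_{e\in B} h_j(e)\} \;=\; \max\{x_A^j, x_B^j\},
\]
which is exactly $h_j^+$. No probabilistic argument is needed; distributivity of $\max$ over union does all the work, and the equality holds deterministically rather than only in expectation.

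For $\text{MTS}_2$, the key observation (inherited from the reservoir-sampling construction described just before the corollary) is that $U_T$ is deterministically the set of $u$ elements of the sampled stream $S_T$ with the smallest hash values under the reservoir hash function. With this characterization, I would prove a short exchange claim: the $u$ elements of $S_A\cup S_B$ with smallest hash values all lie in $U_A\cup U_B$. Suppose $e$ is one of those $u$ smallest and, without loss of generality, $e\in S_A$. If $e$ were not in $U_A$, then at least $u$ elements of $S_A$ would have strictly smaller hash value, and those $u$ elements also lie in $S_A\cup S_B$, contradicting the choice of $e$. Hence $e\in U_A$. Applying this to all $u$ smallest elements, the recipe of selecting the $u$ minimal hash values in $U_A\cup U_B$ reconstructs $U_{A\cup B}$ exactly.

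The only subtlety — really the only thing to watch — is the treatment of elements of $A\cap B$. Since the same reservoir hash function is used for both subsamples, such an element receives the same hash value wherever it appears; after the deduplication implicit in viewing a stream of distinct elements as a set, it contributes a single representative to $U_{A\cup B}$, which is exactly what the claimed $u$-smallest recipe produces. I do not anticipate any deeper obstacle; the corollary is fundamentally a bookkeeping statement about min/max hash structures, and both halves follow by direct verification rather than by a genuinely probabilistic argument.
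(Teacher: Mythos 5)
The paper states this corollary without any proof at all---it is presented as a direct consequence of the sketch definitions (``Notice that the MTS sketch is additive\ldots'')---so there is no authorial argument to compare against; your write-up supplies the verification the authors left implicit, and both halves are the right ones. The $\text{MTS}_1$ identity is exactly the distributivity of $\max$ over union and holds deterministically, as you say. Your exchange claim for $\text{MTS}_2$---that the $u$ smallest hash values of $S_A \cup S_B$ all lie in $U_A \cup U_B$---is the standard composability argument for bottom-$u$ sketches and is correct as far as it goes: any element among the $u$ smallest of the combined sample that escaped its own stream's bottom-$u$ would be dominated by $u$ strictly smaller hash values drawn from that stream alone, contradicting its rank in the union.

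The one misstep is the direction in which you resolve the $A \cap B$ subtlety. The paper's $\text{MTS}_2$ is not a set of distinct values: it is a subsample of the sampled stream \emph{with repetitions}, and the multiplicity structure inside $U$ is precisely what Procedures 2 and 5 consume---they count $\abs{U_1}$, the number of elements appearing exactly once in $U$, to run Good--Turing. (The paper's reservoir rule, which inserts any new arrival whose hash is strictly below the current maximum of $U$, does accumulate duplicate occurrences.) An element sampled into both $S_A$ and $S_B$ appears at least twice in the concatenated sample $S_\cup$ and must be allowed to appear with multiplicity in $U_{A\cup B}$; the union in the corollary has to be read as a \emph{multiset} union. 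Under your ``single representative'' reading, every element of $U_{A\cup B}$ would appear exactly once, so Procedure 2 would always return $\widehat{P_0^\cup} = u/u = 1$, breaking every downstream estimator. Your exchange argument survives the correction if you run it on occurrences rather than distinct elements, but then you should note that occurrences of the same element share one hash value, so ``the $u$ minimal hash values'' requires a consistent tie-breaking convention (the strict-comparison reservoir supplies one). In short: your proof of the corollary as literally stated is sound, but the deduplication claim should be reversed to keep the statement compatible with how $U_{A\cup B}$ is actually used.
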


\subsection{Cardinality Estimation of a Single Stream with Sampling}

In \cite{Cohen15}, Good-Turing is used to combine a sampling process with a generic cardinality estimation procedure of a single stream. 
The algorithm receives the sampled stream as an input and returns an estimate for $n$. The algorithm consists of two steps: (a) estimating $n_s$ using \cardProc{}\footnote{Any procedure for estimating the cardinality of a single stream without sampling, such as in \cite{Cosma:2011, Flajolet2007, Giroire2009, Lumbroso2010}. This procedure is called Procedure 1 in \cite{Cohen15}.}; (b) estimating $n/{n_s}$, the factor by which the cardinality $n_s$ of the sampled stream should be multiplied in order to estimate the cardinality $n$ of the full stream.

To estimate $n_s$ in step (a), \cardProc{} is invoked using $m$ storage units. To estimate $n/{n_s}$ in step (b), we note that $P_0 = (n-n_s)/n$ and thus $1/(1-P_0) = n/{n_s}$. Therefore, the problem of estimating $n/{n_s}$ is reduced to estimating the probability $P_0$ of unseen elements.
According to Good-Turing, $\widehat{P_0}=\abs{E_1}/l$ is a consistent estimator for $P_0$, as described in Section \ref{sec:sketch}. Thus, we only need to find the number $\abs{E_1}$ of elements that appear exactly once in the sampled stream. To compute the value of $\abs{E_1}$ precisely, one should keep track of all the elements in the sample and ignore each previously encountered element. To this end, $O(l)$ storage units are needed, which is linear in the sample size and is not scalable. We reduce this cost by approximating the value of $\abs{E_1}/l$ using the subsample $\text{MTS}_2$ of the sampled stream. The above algorithm can be formulated as follows.

\begin{algorithm2}
\textbf{\newline (cardinality estimation of a single stream $X$ using the proposed MTS)}
\label{alg:cardEstWithSampling}
\begin{enumerate}
\item [(a)] Estimate the number $n_s$ of distinct elements in the sample $Y$ by invoking \cardProc{} on $\text{MTS}_1 = H^+$; namely, $\widehat{n_s} = \cardProc{}(H^+)$.
\item [(b)] Estimate the ratio $n/{n_s}$ by computing $\frac{1}{1-\widehat{P_0}}$, where $\widehat{P_0} = \abs{E_1}/l$. The value of 
$\abs{E_1}/l$ is estimated by invoking Procedure \ref{proc:p0} on the subsample $U$.
\item [(c)] Return $\widehat{n} = \widehat{n_s} \cdot \widehat{n/{n_s}}$ as an estimator for the cardinality of the entire stream $X$.
\end{enumerate}
\end{algorithm2}

\begin{procedure}
\textbf{\newline (estimation of $P_0$ from $\text{MTS}_2$)}
\label{proc:p0}
\begin{enumerate}
\item Compute (precisely) the number $\abs{U_1}$ of elements that appear only once in $\text{MTS}_2 = U$.
\item Return $\widehat{P_0} = \abs{U_1}/u$.
\end{enumerate}
\end{procedure}

\subsection{Cardinality Estimation of Set Union with Sampling}\label{sub:union}

Let $S_A$ and $S_B$ be the samples of $A$ and $B$ respectively. Let $S_AS_B$ be the concatenation of these samples. $S_AS_B$ is actually a sample of $A \cup B$, i.e., $S_AS_B = S_\cup$ (recall our notations from Table \ref{table:notations}). Thus, estimating the cardinality of $A\cup B$ is equivalent to estimating the cardinality of a single stream $A\cup B$ using $S_{\cup}$. To this end, we present Algorithm \ref{alg:unionEstWithSampling}, which uses Algorithm \ref{alg:cardEstWithSampling} on the MTS sketch of $S_\cup$.

\begin{algorithm2}
\textbf{\newline (estimating the cardinality of $A\cup B$ using $S_A$ and $S_B$)}
\label{alg:unionEstWithSampling}
\begin{enumerate}
\item [(a)] Maintain the MTS sketch for $A$ and $B$ as follows:
\begin{itemize}
\item $\text{MTS}_1(A) = x_{A}^1,x_{A}^2,\ldots,x_{A}^m$ are the $m$ maximal hash values of $A$; $\text{MTS}_2(A) = U_{A}$ is a subsample of $A$ of length $u$.
\item $\text{MTS}_1(B) = x_{B}^1,x_{B}^2,\ldots,x_{B}^m$ are the $m$ maximal hash values of $B$; $\text{MTS}_2(B) = U_{B}$ is a subsample of $B$ of length $u$.
\end{itemize}
\item [(b)] Compute the MTS sketch of $A\cup B$ according to Corollary \ref{lemma:mtsunion} as follows:
\begin{itemize}
\item[] $\text{MTS}_1(A\cup B) = H^+ = h_1^+,h_2^+,\ldots,h_m^+$, where $h_i^+ = \max{\set{x_{A}^i,x_{B}^i}}$.
\item[] $\text{MTS}_2(A\cup B) = U$ is the $u$ minimal hash values in $U_{A} \cup U_{B}$.
\end{itemize}
\item [(c)] Estimate $\widehat{\abs{A\cup B}}$ using Algorithm \ref{alg:cardEstWithSampling} and $\text{MTS}(A\cup B)$.
\end{enumerate}
\end{algorithm2}

\subsection{Cardinality Estimation of Set Intersection with Sampling}

As observed in \cite{Dasu02}, $\abs{A\cap B}= \abs{A\cup B} \cdot \rho$, where $\rho$ is the Jaccard similarity of the two full streams $A$ and $B$. 
Algorithm \ref{alg:unionEstWithSampling} can be used for estimating $\abs{A\cup B}$. To estimate the Jaccard similarity $\rho$, we note that
\begin{align}\label{eq:j1}
\rho + 1 &= \frac{\abs{A \cap B}}{\abs{A \cup B}} + 1 = \frac{\abs{A \cap B}+\abs{A \cup B}}{\abs{A \cup B}} \nonumber \\
&= \frac{\abs{A} + \abs{B}}{\abs{A \cup B}} 
= \frac{\abs{A} + \abs{B}}{\abs{A \cup B}} \cdot \frac{\abs{S_{\cup }}}{\abs{S_{\cup}}} \nonumber \\
&= \frac{\abs{A} + \abs{B}}{\abs{S_{\cup}}} \cdot \frac{\abs{S_{\cup}}}{\abs{A \cup B}} 
\text{.}
\end{align}
Recall that according to Good-Turing, it follows that (see the notations in Table \ref{table:notations})
\begin{equation}\label{eq:j2}
\abs{A} = \frac{1}{1-P_0^A}\cdot \abs{S_A} \text{,} \:\: \text{or equivalently} \:\: \frac{\abs{S_A}}{\abs{A}} = 1-P_0^A {.}
\end{equation}
Similar equations can be written for $\frac{\abs{S_B}}{\abs{B}}$ and for $\frac{\abs{S_\cup}}{\abs{A\cup B}}$. Substituting Eq. (\ref{eq:j2}) into Eq. (\ref{eq:j1}) yields that
\begin{align*}
\rho + 1 &= \big(\frac{1}{1-P_0^A}\cdot\frac{\abs{S_A}}{\abs{S_{\cup}}} + \frac{1}{1-P_0^B}\cdot\frac{\abs{S_B}}{\abs{S_{\cup}}}\big)\cdot \frac{\abs{S_{\cup}}}{\abs{A \cup B}} \\
&= \big(\frac{1}{1-P_0^A}\cdot\frac{\abs{S_A}}{\abs{S_{\cup}}} + \frac{1}{1-P_0^B}\cdot\frac{\abs{S_B}}{\abs{S_{\cup}}}\big)\cdot (1-P_0^\cup) \text{,}
\end{align*}
or equivalently
\begin{equation}\label{eq:jacInter2}
\widehat{\rho} = \big(\widehat{\frac{1}{1-P_0^A}}\cdot\widehat{\frac{\abs{S_A}}{\abs{S_{\cup}}}} + \widehat{\frac{1}{1-P_0^B}}\cdot\widehat{\frac{\abs{S_B}}{\abs{S_{\cup}}}}\big)\cdot \widehat{(1-P_0^\cup)} - 1\text{.}
\end{equation}
Denoting $d_A=\frac{\abs{S_A}}{\abs{S_\cup}}$ and $d_B=\frac{\abs{S_B}}{\abs{S_\cup}}$, Eq. (\ref{eq:jacInter2}) can be rewritten as follows:
\begin{equation}\label{eq:jacEst2}
\widehat{\rho}= \big(\widehat{\frac{1}{1-P_0^A}}\cdot\widehat{d_A} + \widehat{\frac{1}{1-P_0^B}}\cdot\widehat{d_B}\big)\cdot \widehat{(1-P_0^\cup)} - 1
\text{.}
\end{equation}
We now present Algorithm \ref{alg:interEstWithSampling}, for estimating $\abs{A \cap B}$. In this algorithm, $P_0^A$ and $P_0^B$ are estimated using Procedure \ref{proc:p0}. $P_0^\cup$ can also be estimated using Procedure \ref{proc:p0}, from $\text{MTS}_2(A\cup B)$. Finally, $d_A$ and $d_B$ are estimated using Procedure \ref{proc:da} (presented after Algorithm \ref{alg:interEstWithSampling}).

\begin{algorithm2}
\textbf{\newline (estimating the cardinality of $A \cap B$ using $S_A$ and $S_B$)}
\label{alg:interEstWithSampling}
\begin{enumerate}
\item [(a)] Maintain the MTS sketch for $A$ and $B$ as follows:
\begin{itemize}
\item $\text{MTS}_1(A) = x_{A}^1,x_{A}^2,\ldots,x_{A}^m$ are the $m$ maximal hash values of $A$; $\text{MTS}_2(A) = U_{A}$ is a subsample of $A$ of length $u$.
\item $\text{MTS}_1(B) = x_{B}^1,x_{B}^2,\ldots,x_{B}^m$ are the $m$ maximal hash values of $B$; $\text{MTS}_2(B) = U_{B}$ is a subsample of $B$ of length $u$.
\end{itemize}
\item [(b)] Use Procedure \ref{proc:p0} to estimate $P_0^A$,$P_0^B$ and $P_0^\cup$.
\item [(c)] Use Procedure \ref{proc:da} to estimate $d_A$ and $d_B$.
\item [(d)] Use Algorithm \ref{alg:unionEstWithSampling} to estimate $\abs{A\cup B}$.
\item [(e)] Use the estimations for $P_0^A$,$P_0^B$,$P_0^\cup$,$d_A$ and $d_B$ to estimate the Jaccard similarity $\rho$ according to Eq. (\ref{eq:jacEst2}).
\item [(f)] Return $\widehat{\abs{A\cap B}} = \widehat{\abs{A\cup B}} \cdot \widehat{\rho}$.
\end{enumerate}
\end{algorithm2}

\setcounter{procCounter}{2}

\begin{procedure}
\textbf{\newline (estimation of $d_A$ from $\text{MTS}_2(A)$ and $\text{MTS}_2(B)$)}
\label{proc:da}
\begin{enumerate}
\item Compute $\text{MTS}_2(A\cup B)$ according to Corollary \ref{lemma:mtsunion}. 
\item Compute (precisely) the cardinalities of $\abs{\text{MTS}_2(A\cup B)}$ and $\abs{\text{MTS}_2(A) \cap \text{MTS}_2(A\cup B)}$.
\item Return $\widehat{d_A} = \frac{\abs{\text{MTS}_2(A) \cap \text{MTS}_2(A\cup B)}}{\abs{\text{MTS}_2(A\cup B)}}$.
\end{enumerate}
\end{procedure}

\subsection{Cardinality Estimation of Set Difference with Sampling}

As observed in \cite{Dasu02}, $\abs{A \setminus B}= \abs{A \cup B} \cdot \rho_>$, where $\rho_>=\rho(A>B)$ (see Eq. (\ref{eq:otherJacs})).
Thus, Algorithm \ref{alg:interEstWithSampling} can be used for estimating $\abs{A\setminus B}$, with the only difference being that $\rho_>$ rather than $\rho$ has to be estimated. We note that
\begin{align}\label{eq:j3}
1 - \rho_> &= 1 - \frac{\abs{A \setminus B}}{\abs{A \cup B}} = \frac{\abs{A\cup B} - (\abs{A\cup B} - \abs{B})}{\abs{A\cup B}} \nonumber \\
&= \frac{\abs{B}}{\abs{A \cup B}} = \frac{\abs{B}}{\abs{A \cup B}} \cdot \frac{\abs{S_{\cup}}}{\abs{S_{\cup}}} \nonumber \\
&= \frac{\abs{B}}{\abs{S_{\cup}}}\cdot \frac{\abs{S_{\cup}}}{\abs{A\cup B}}
\text{.}
\end{align}
Eq. (\ref{eq:j3}) follows from the inclusion-exclusion principle and some elementary algebraic manipulations. By substituting Eq. (\ref{eq:j2}) into Eq. (\ref{eq:j3}), we get
\begin{equation}\label{eq:jacDiff2}
\widehat{\rho_>} = 1- \widehat{\frac{1}{1-P_0^B}}\cdot\widehat{\frac{\abs{S_B}}{\abs{S_{\cup}}}}\cdot \widehat{(1-P_0^\cup)} \text{.}
\end{equation}
Recall our notation from Table \ref{table:notations}, where $d_B=\frac{\abs{S_B}}{\abs{S_{\cup}}}$. We can rewrite Eq. (\ref{eq:jacDiff2}) as follows:
\begin{equation}\label{eq:jacDiffEst2}
\widehat{\rho(A>B)}= 1 - \widehat{\frac{1}{1-P_0^B}}\cdot\widehat{d_B}\cdot \widehat{(1-P_0^\cup)}
\text{.}
\end{equation}
We now present Algorithm \ref{alg:diffEstWithSampling} for estimating $\abs{A \setminus B}$. In this algorithm, $P_0^B$ and $P_0^\cup$ are estimated using Procedure \ref{proc:p0}. In addition, $d_B$ is estimated using Procedure \ref{proc:da}.

\begin{algorithm2}
\textbf{\newline (estimating the cardinality of $\abs{A \setminus B}$ using $S_A$ and $S_B$)}\ \\
\label{alg:diffEstWithSampling}
Same as in Algorithm \ref{alg:interEstWithSampling} except that $\rho_>$ rather than $\rho$ is estimated (using Eq. (\ref{eq:jacDiffEst2})).
\end{algorithm2}

%%%%%%%%%%%%%% k streams
%%%%%%%%%%%%%%

\section{MTS-based Streaming Algorithms for Estimating the Cardinality of Any Set Expression between $k>2$ Streams}\label{sec:algok}

We now present a generic MTS-based algorithm for estimating the cardinality of any set expression between $k>2$ streams. 
Let $A_1,A_2,\ldots,A_k$ be the original streams, and $S_1,S_2,\ldots,S_k$, be their sampled streams. Our goal is to use these sampled streams in order to estimate the cardinality of $X=A_1 (\star 1) A_2 (\star 2) \ldots (\star (k-1)) A_k$, where $(\star t) \in {\cup,\cap,\setminus}$.

Let $Y = S_1 (\star 1) S_2 (\star 2) \ldots (\star (k-1)) S_k$. We denote the cardinalities of $X$ and $Y$ by $n$ and $n_s$, respectively. Let $\rho_G$ be the ``generalized'' Jaccard similarity $\rho_G = \frac{n_s}{\abs{S_\cup}}$. It can be estimated from $\text{MTS}_1(A_1),\ldots,\text{MTS}_1(A_k)$ in a similar way to the estimation of $\rho$ in Eq. (\ref{eq:jacest}), namely,
\begin{equation}\label{eq:estRG}
\widehat{\rho_G} = \frac{\sum_{j=1}^{m}{I_j}}{m}\text{,}
\end{equation}
where the indicator variable $I_j$ is $1$ if, for the $j$'th hash function, $\text{MTS}_1(A_1),\ldots,\text{MTS}_1(A_k)$ satisfy the condition implied by the set expressions, and is $0$ otherwise.

Using algebraic manipulations and the definition of $\rho_G$ we obtain that
\begin{equation*}
n = n_s\cdot\frac{n}{n_s} = \frac{n_s}{\abs{S_\cup}}\cdot \abs{S_\cup} \cdot \frac{n}{n_s} = \rho_G \cdot \abs{S_\cup} \cdot \frac{n}{n_s} \text{.}
\end{equation*}
Thus, we can estimate $n$ by:
\begin{equation*}
\widehat{n} = \widehat{\rho_G}\cdot\widehat{\abs{S_\cup}}\cdot \widehat{\frac{1}{1-P_0^X}}\text{.}
\end{equation*}

We now present Algorithm \ref{alg:generalEstWithSampling} for estimating $n$.
The algorithm consists of three steps: (a) using Eq. (\ref{eq:estRG}) to estimate $\rho_G$; (b) using \cardProc{} to estimate $\abs{S_\cup}$; and (c) using Procedure \ref{proc:gen} to estimate $\frac{n}{n_s}$, the factor by which the cardinality $n_s$ of the sampled stream should be multiplied in order to estimate the cardinality $n$ of the full stream.

\begin{algorithm2}
\textbf{\newline (estimating any set-expression cardinality of $k>2$ streams with sampling using MTS)}
\label{alg:generalEstWithSampling}
\begin{enumerate}
\item [(a)] Maintain the MTS sketch for $A_i$ as follows:
\begin{itemize}
\item[] $\text{MTS}_1(A_i) = x_{A_i}^1,x_{A_i}^2,\ldots,x_{A_i}^m$ be the $m$ maximal hash values of $A_i$; 
\item[] $\text{MTS}_2(A_i) = U_{A_i}$ is a subsample of $A_i$ of length $u$.
\item [(b)] Use Eq. (\ref{eq:estRG}) to estimate $\rho_G$.
\item [(c)] Use \cardProc{} to estimate $\abs{S_\cup}$.
\item [(d)] Use Procedure \ref{proc:gen} to estimate $\frac{1}{1-P_0^X}$.
\item [(e)] Return $\widehat{n} = \widehat{\rho_G}\cdot \widehat{\abs{S_\cup}}\cdot \widehat{\frac{1}{1-P_0^X}}$.
%as an estimate for $n$.
\end{itemize}
\end{enumerate}
\end{algorithm2}

\setcounter{procCounter}{4}

\begin{procedure}
\textbf{\newline (estimation of $\frac{1}{1-P_0^X}$ from $\text{MTS}_2(A_1),\ldots,\text{MTS}_2(A_k)$)}
\label{proc:gen}
\begin{enumerate}
\item Compute $U_X = \text{MTS}_2(A_1)(\star 1)\text{MTS}_2(A_2)(\star 2)\ldots (\star (k-1))\text{MTS}_2(A_k)$.
\item If the length of $U_X$ is greater than $u$, then keep in $U_X$ only the $u$ elements with the minimal hash values.
\item Let $f_1$ be the smallest frequency of element in $U_X$; compute (precisely) the value of $\abs{U_1}$, namely, the number of elements that appear exactly $f_1$ times in $U_X$.
\item Compute $\widehat{P_0^X} = \frac{\abs{U_1}}{f}$, where $f$ is the length of $U_X$.
\item Return $\frac{1}{1-\widehat{P_0^X}}$.
\end{enumerate}
\end{procedure}

%%%%%%%%%%%%%%%%% sims
%%%%%%%%%%%%%%%%%

\section{Simulation Study} \label{sec:sim}

In this section we validate our analysis from the Appendix for the asymptotic bias and variance of the presented MTS algorithms. More specifically, we show that
\begin{itemize}
\item Algorithms \ref{alg:interEstWithSampling} and \ref{alg:diffEstWithSampling} are unbiased, as proven by Theorems \ref{thmInter} and \ref{thmDiff};
\item the variance of Algorithms \ref{alg:interEstWithSampling} and \ref{alg:diffEstWithSampling} is close to their analyzed variance in Theorems \ref{thmInter} and \ref{thmDiff};
\item the variance of Algorithm \ref{alg:generalEstWithSampling} is close to its analyzed variance in Theorem \ref{thmKbigT2}.
\end{itemize} 

We implement the algorithms using the HyperLogLog \cite{Flajolet2007} as \cardProc{}, and simulate two sets, $A$ and $B$, whose cardinalities are as follows:
\begin{itemize}
\item $\abs{A}=a=10^4$;
\item $\abs{B} = a \cdot f$, where $f>0$;
\item $\abs{A \cap B} = a \cdot \alpha$, where $0\leq\alpha \leq 1$.
\end{itemize}
Each distinct element $e_j$ appears $f_j$ times in the original (unsampled) stream. The $f_j$ frequencies are determined according to the following models:
\begin{itemize}
\item Uniform distribution: The frequency of the elements is uniformly distributed between $100$ and $1,000$; i.e., $f_j \sim \Unif{10^2}{10^3}$.
\item Pareto distribution: The frequency of the elements follows the heavy-tailed rule with shape parameter $p$ and scale parameter $s=500$; i.e., the frequency probability function is $p(f_j) = p s^\alpha f^{-p-1}$, where $p>0$ and $f_j \ge s >0$. The scale parameter $s$ represents the smallest possible frequency.
\end{itemize}

Pareto distribution has several unique properties. In particular, if $\alpha \le 2$, it has infinite variance, and if $\alpha \le 1$, it has infinite mean. 
As $\alpha$ decreases, a larger portion of the probability mass is in the tail of the distribution, and it is therefore useful when a small percentage of the population controls the majority of the measured quantity.

Each of the simulation tests described below is repeated for $1,000$ different sets. Thus, for each algorithm and for each $\alpha$ value we get a vector of $1,000$ different estimations. Then, for each $\alpha$ value, we compute the variance and bias of this vector, and view the result as the variance and bias of the algorithm (for the specific $\alpha$ value). Each such computation is represented by one table row. 
Let $v_{\alpha}=(\widehat{n}_1,\ldots,\widehat{n}_{10^3})$ be the vector of estimations for a specific algorithm and for a specific $\alpha$ value. Let $\mu = \frac{1}{10^3}\sum_{i=1}^{10^3}{\widehat{n}_i}$ be the mean of $v_{\alpha}$.
The bias and variance of $v_{\alpha}$ are computed as follows:
\begin{equation*}
\text{Bias}(v_{\alpha})=\abs{\frac{1}{n}(\mu - n)}
\end{equation*}
and
\begin{equation*}
\Var{v_{\alpha}}=\frac{1}{10^3}\sum_{i=1}^{10^3}{(\widehat{n}_i-\mu)^2}.
\end{equation*}

\subsection{The Case of Two Streams: Algorithms \ref{alg:interEstWithSampling} and \ref{alg:diffEstWithSampling}}

First, we verify the unbiasedness of Algorithms \ref{alg:interEstWithSampling} and \ref{alg:diffEstWithSampling}.
Table~\ref{table:biasInterAndDiff} presents the simulation results for different $\alpha$ values using uniformly distributed frequencies ($m=10$ buckets and $u=300$) and Pareto distributed frequencies ($m=100$ buckets and $u=1,000$). The sampling ratio is $P=1/100$ and $f=1$.
In each table row we present the bias. The results in Table \ref{table:biasInterAndDiff} show very good agreement between the simulation results and our analysis, because all bias values are very close to $0$.

For the uniform distribution, the number of distinct elements is $n=10,000$. Thus, the expected length of each original stream is $ 10,000 \cdot \frac{100 + 1,000}{2} = 5.5 \cdot 10^6$.
We can see that a total budget of $B=10+300=310$ storage units per stream, which is about $0.006 \%$ of the stream length, yields accurate estimation of both intersection and difference cardinalities. For the Pareto distribution, the expected length of each original stream is $\approx 500 \cdot 10^6$. Using a total budget of $B=100+1,000=1,100$ storage units, namely, $2\cdot 10^{-6}$ of the stream length, yields very accurate estimations.

\begin{table}[ht]
\centering 
\begin{tabular}{|c|c|c|c|c|}
\hline
\multirow{2}{*}{$\alpha$} & \multicolumn{2}{|c|}{uniform} & \multicolumn{2}{|c|}{pareto} \\ \cline{2-5}
{} & Alg. 3 & Alg. 4 & Alg. 3 & Alg. 4 \\\hline 
0.1 & 0.0056 & 0.0089 & 0.0355 & 0.0143 \\\hline
0.3 & 0.0013 & 0.0171 & 0.0152 & 0.0001 \\\hline 
0.5 & 0.0114 & 0.0087 & 0.0117 & 0.0052 \\\hline 
0.7 & 0.0167 & 0.0093 & 0.0165 & 0.0294 \\\hline
0.9 & 0.0274 & 0.0379 & 0.0136 & 0.0235 \\\hline
\end{tabular}
\caption{Simulation results for the bias of Algorithms \ref{alg:interEstWithSampling} and \ref{alg:diffEstWithSampling} using uniform and Pareto distributions. All values are indeed very close to $0$.}
\label{table:biasInterAndDiff}
\end{table}

We now study the variance of Algorithms \ref{alg:interEstWithSampling} and \ref{alg:diffEstWithSampling}.
Tables~\ref{table:varInter} and \ref{table:varDiff} present the simulation results of both algorithms for different $\alpha$ values using uniform and Pareto frequency distributions. In both tables, $m=100$ buckets and $u=1,000$. The sampling ratio is $P=1/100$ and we use two values of $f$, $f=1$ and $f=3$. The results are averaged over $1,000$ runs, 
and the ``analysis'' variance is determined according to Theorems \ref{thmInter} and \ref{thmDiff}. 
We can see that there is excellent agreement between the simulation results and the results expected by our analysis: the relative error is always less than $20\%$, and mostly less than $10\%$.

\begin{table}[ht]
\centering
\begin{subfigure}[b]{0.44\textwidth}
\begin{tabular}{|c|c|c|c|c|}
\hline
\multirow{2}{*}{$\alpha$} & \multicolumn{2}{|c|}{uniform} & \multicolumn{2}{|c|}{pareto} \\ \cline{2-5}
{} & simulations & analysis & simulations & analysis \\\hline 
0.1 & 0.1228 & 0.1030 & 0.5443 & 0.4787 \\\hline
0.3 & 0.0403 & 0.0373 & 0.1495 & 0.1387 \\\hline 
0.5 & 0.0255 & 0.0236 & 0.0809 & 0.0672 \\\hline 
0.7 & 0.0174 & 0.0171 & 0.0415 & 0.0399 \\\hline
0.9 & 0.0127 & 0.0128 & 0.0193 & 0.0183 \\\hline
\end{tabular}
\caption{$f=1$}
\label{table:varInterFrac1}
\end{subfigure}
\\
\begin{subfigure}[b]{0.44\textwidth}
\begin{tabular}{|c|c|c|c|c|}
\hline
\multirow{2}{*}{$\alpha$} & \multicolumn{2}{|c|}{uniform} & \multicolumn{2}{|c|}{pareto} \\ \cline{2-5}
{} & simulations & analysis & simulations & analysis \\\hline 
0.1 & 0.1814 & 0.2058 & 0.9982 & 0.9671 \\\hline
0.3 & 0.0848 & 0.0706 & 0.3146 & 0.2766 \\\hline 
0.5 & 0.0516 & 0.0442 & 0.1972 & 0.1955 \\\hline 
0.7 & 0.0391 & 0.0327 & 0.1432 & 0.1258 \\\hline
0.9 & 0.0284 & 0.0262 & 0.0946 & 0.0929 \\\hline
\end{tabular}
\caption{$f=3$}
\label{table:varInterFrac3}
\end{subfigure}
\caption{Simulation results for the variance of Algorithm \ref{alg:interEstWithSampling} using uniform and Pareto distributions.
The ``analysis'' variance is determined according to Theorem \ref{thmInter}.}
\label{table:varInter}
\end{table}

\begin{table}[ht]
\centering
\begin{subfigure}[b]{0.44\textwidth}
\begin{tabular}{|c|c|c|c|c|}
\hline
\multirow{2}{*}{$\alpha$} & \multicolumn{2}{|c|}{uniform} & \multicolumn{2}{|c|}{pareto} \\ \cline{2-5}
{} & simulations & analysis & simulations & analysis \\\hline 
0.1 & 0.0158 & 0.0170 & 0.0407 & 0.0371 \\\hline
0.3 & 0.0202 & 0.0200 & 0.0519 & 0.0457 \\\hline 
0.5 & 0.0281 & 0.0259 & 0.0722 & 0.0689 \\\hline 
0.7 & 0.0438 & 0.0413 & 0.1359 & 0.1340 \\\hline
0.9 & 0.1486 & 0.1481 & 0.4510 & 0.4213 \\\hline
\end{tabular}
\caption{$f=1$}
\label{table:varDiffFrac1}
\end{subfigure}
\\
\begin{subfigure}[b]{0.44\textwidth}
\begin{tabular}{|c|c|c|c|c|}
\hline
\multirow{2}{*}{$\alpha$} & \multicolumn{2}{|c|}{uniform} & \multicolumn{2}{|c|}{pareto} \\ \cline{2-5}
{} & simulations & analysis & simulations & analysis \\\hline 
0.1 & 0.0282 & 0.0299 & 0.0919 & 0.0867 \\\hline
0.3 & 0.0384 & 0.0373 & 0.1040 & 0.1176 \\\hline 
0.5 & 0.0568 & 0.0510 & 0.1902 & 0.1734 \\\hline 
0.7 & 0.0880 & 0.0853 & 0.2966 & 0.2887 \\\hline
0.9 & 0.3138 & 0.3098 & 0.9718 & 0.8346 \\\hline
\end{tabular}
\caption{$f=3$}
\label{table:varDiffFrac3}
\end{subfigure}
\caption{Simulation results for the variance of Algorithm \ref{alg:diffEstWithSampling} using uniform and Pareto distributions. 
The ``analysis'' variance is determined according to Theorem \ref{thmDiff}.}
\label{table:varDiff}
\end{table}

\subsection{The Case of $k>2$ Streams: Algorithm \ref{alg:generalEstWithSampling}}

We now seek to verify the variance analysis of Algorithm \ref{alg:generalEstWithSampling} (Theorem \ref{thmKbigT2}).
We consider three streams, each with $10^4$ unique elements and uniformly distributed frequencies as described before.
Let us denote the streams $A,B$ and $C$. In this section we are interested in estimating $\abs{(A\cap B)\setminus C}$. One can easily see this cardinality satisfies
\begin{equation*}
\abs{(A\cap B)\setminus C} = \abs{A\cap B} - \abs{A\cap B\cap C}{.}
\end{equation*}
In the simulation test below we fix the cardinality of $\abs{A\cap B\cap C}$ and estimate $\widehat{\abs{(A\cap B)\setminus C}}$ for different values of the intersection $\abs{A\cap B}$ using Algorithm \ref{alg:generalEstWithSampling}.
To sum up, we simulate the three streams with the following cardinalities:
\begin{enumerate}
\item $\abs{A}=\abs{B}=\abs{C}=10^4$.
\item $\abs{A\cap C}=\abs{B\cap C}=2,000$.
\item $\abs{A\cap B\cap C}=1,000$.
\end{enumerate}
We then estimate $\abs{(A\cap B)\setminus C}$ for different values of the intersection $\abs{A\cap B}$. 

Table~\ref{table:varThree} presents the simulation results for different intersection values for $m=100$ buckets and $u=1,000$. The sampling ratio is $P=1/100$. The results are averaged again over $1,000$ runs, 
and the ``analysis'' variance is determined according to Theorem \ref{thmKbigT2}.
We can clearly see that the variance of the algorithm as found by the simulations is very close to the variance found by our analysis (relative error of about $5 \%$). As expected, when $\abs{A\cap B}$ increases (and the estimated quantity $\abs{(A\cap B)\setminus C}$ increases as well), the variance decreases.

\begin{table}[ht]
\centering 
\begin{tabular}{|c|c|c|}
\hline
$\abs{A\cap B}$ & simulations & analysis \\ \hline
1500 & 0.5361 & 0.5102 \\\hline
2000 & 0.2345 & 0.2501 \\\hline 
3000 & 0.1139 & 0.1201 \\\hline 
4000 & 0.0689 & 0.0767 \\\hline
5000 & 0.0516 & 0.0551 \\\hline
6000 & 0.0367 & 0.0421 \\\hline
7000 & 0.0281 & 0.0334 \\\hline
8000 & 0.0243 & 0.0272 \\\hline
8500 & 0.0228 & 0.0247 \\\hline
\end{tabular}
\caption{Simulation results for Algorithms \ref{alg:generalEstWithSampling} using uniform distribution.
The ``analysis'' variance is determined according to Theorem \ref{thmKbigT2}.}
\label{table:varThree}
\end{table}

%%%%%%%%%%%%%%%%%% conclsion
%%%%%%%%%%%%%%%%%%

\section{Conclusions} \label{sec:conclusion}

In this paper we studied the generalization of the cardinality estimation problem to multiple streams, when only a small sample of each stream is given. 
We presented a new framework of sketch algorithms that combine sub-linear space and sampling. The new framework, called Maximal-Term with Subsample (MTS), can accurately estimate the cardinality of any set expression between any number of streams using only a small sample of each stream. We presented three algorithms that address any set expression between two streams. We then presented another algorithm that extends these algorithms to the case of $k>2$ streams.
We showed that all our algorithms are unbiased, and we analyzed their asymptotic variance. Finally, we presented simulation results that validate our analysis.

% ------------------------------------------------------------------------
% Bibliography
% ------------------------------------------------------------------------
\bibliographystyle{abbrv}
\bibliography{MTS}

%%%%%%%%%%%%%%%%% appendix
%%%%%%%%%%%%%%%%%
\appendix
\section*{Appendix}

\renewcommand\thesubsection{\Alph{subsection}}

We now analyze our proposed algorithms. To shorten the notations, we use $n$ to denote the wanted cardinality in each algorithm.

We start with a preliminary lemma that shows how to compute the probability distribution of a product of two normally distributed random variables whose covariance is $0$:
\begin{lemma}[Product distribution]\label{lemma:productDis} \ \\
Let $X$ and $Y$ be two random variables satisfying that $X \to \Normal{\mu_x}{\sigma_x^2}$, $Y \to \Normal{\mu_y}{\sigma_y^2}$, and $\Cov{X}{Y}=0$.
Then, the product $X \cdot Y$ asymptotically satisfies the following:
\begin{equation*}
X\cdot Y \to \Normal{\mu_x\mu_y}{\mu_y^2\sigma_x^2 + \mu_x^2\sigma_y^2} \text{.}
\end{equation*}
\end{lemma}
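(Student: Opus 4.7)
The plan is to apply the multivariate delta method to the smooth map $g(x,y)=xy$ at the point $(\mu_x,\mu_y)$. First I would rewrite $X=\mu_x+U$ and $Y=\mu_y+V$, where $U$ and $V$ are the centered deviations, which in the asymptotic regime are small (of order $n^{-1/2}$) and satisfy $U \to \Normal{0}{\sigma_x^2}$, $V \to \Normal{0}{\sigma_y^2}$ with $\Cov{U}{V}=\Cov{X}{Y}=0$. Algebraic expansion gives
\begin{equation*}
XY = \mu_x\mu_y + \mu_y U + \mu_x V + UV.
\end{equation*}

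Next I would argue that the cross term $UV$ is asymptotically negligible compared to the linear terms $\mu_y U + \mu_x V$: since $U,V$ are each of order $n^{-1/2}$, the product $UV$ is of order $n^{-1}$ and hence vanishes at the scale at which the linear perturbation is normal. This is the standard first-order truncation underlying the delta method. Thus to leading order
\begin{equation*}
XY - \mu_x\mu_y \;\approx\; \mu_y U + \mu_x V.
\end{equation*}
The right-hand side is a linear combination of the jointly (asymptotically) normal pair $(U,V)$, and by the linear-combination stability of the normal family it is itself asymptotically normal. Computing its mean and variance directly gives $\Ex{\mu_y U + \mu_x V}=0$ and
\begin{equation*}
\Var{\mu_y U + \mu_x V} \;=\; \mu_y^2\sigma_x^2 + \mu_x^2\sigma_y^2 + 2\mu_x\mu_y\Cov{U}{V} \;=\; \mu_y^2\sigma_x^2 + \mu_x^2\sigma_y^2,
\end{equation*}
where the covariance drops out by the hypothesis $\Cov{X}{Y}=0$. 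Combining with the deterministic offset $\mu_x\mu_y$ yields the claimed limit $\Normal{\mu_x\mu_y}{\mu_y^2\sigma_x^2+\mu_x^2\sigma_y^2}$.

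The main obstacle I expect is the step that discards $UV$: to make this formally rigorous one needs joint asymptotic convergence of $(U,V)$ (not merely marginal convergence plus zero covariance), together with a bound of the form $UV = o_p(n^{-1/2})$ so that Slutsky's theorem can absorb the quadratic remainder. In the applications in this paper, $X$ and $Y$ are estimators built from the same sampling process whose joint asymptotic normality follows from a multivariate CLT, and zero covariance then makes the limit bivariate normal with diagonal covariance; so this technical condition is satisfied in every invocation of the lemma. All other ingredients (expansion, linearity of normals, computation of variance) are routine.
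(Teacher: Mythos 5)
Your delta-method argument is correct and is essentially the proof the paper relies on: the paper gives no inline proof of Lemma~\ref{lemma:productDis}, deferring entirely to \cite{Shao1998}, where the result is obtained by exactly your first-order expansion $XY = \mu_x\mu_y + \mu_y U + \mu_x V + UV$ with the quadratic remainder discarded via Slutsky's theorem. Your closing caveat is also well taken --- the statement implicitly requires \emph{joint} asymptotic normality of $(U,V)$ rather than marginal normality plus zero covariance, a hypothesis the paper leaves unstated but which is satisfied in each of its invocations of the lemma.
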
 
A proof is given in \cite{Shao1998}.

\subsection{Cardinality Estimation without Sampling Using the HyperLogLog Algorithm}

For the rest of the appendix we use the HyperLogLog algorithm \cite{Flajolet2007} as \cardProc{}.
This estimator belongs to the family of max sketches and is the best known cardinality estimator. Its standard error is $ 1.04 /\sqrt{m}$, where $m$ is the number of storage units. Its pseudo-code is as follows:

\begin{algorithm2} 
{The HyperLogLog algorithm for the cardinality estimation problem}
\label{alg:hyperLogLog}
\begin{enumerate}
\item Initialize $m$ registers: $C_1,C_2,\ldots,C_m$ to 0.
\item For each input element $x_i$ do:
\begin{enumerate}
\item Let $\rho= \floor{-\log_2\paren{h_1(x_i)}}$ be the leftmost 1-bit position of the hashed value.
\item Let $j=h_2(x_i)$ be the bucket for this element.
\item $C_j \leftarrow \max{\set{C_j,\rho}}$.
\end{enumerate}
\item To estimate the value of $n$ do:
\begin{enumerate}
\item $Z \leftarrow {(\sum_{j=1}^{m}2^{-C_j})}^{-1}$ is the harmonic mean of $2^{C_j}$.
\item return $\alpha_m m^2 Z$, where \\
$\alpha_m= \paren{m \int^{\infty}_{0}\paren{\log_2{\paren{\frac{2+u}{1+u}}}}^m \,du }^{-1}$.
\end{enumerate}
\end{enumerate}
\end{algorithm2}

The following Lemma summarizes the statistical performance of Algorithm \ref{alg:hyperLogLog} \textbf{without sampling}, i.e., when the algorithm processes the entire stream:

\begin{lemma} \label{lemma:hyper} \ \\
For Algorithm \ref{alg:hyperLogLog}, 
$\widehat{c} \to \Normal{c}{\frac{c^2}{m}}$, where $c$ is the actual cardinality of the considered set, $\widehat{c}$ is the estimate computed by the algorithm, and $m$ is the number of storage units used by the algorithm.
\end{lemma}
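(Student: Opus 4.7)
The plan is to treat this lemma essentially as a restatement of the main analytical results of Flajolet, Fusy, Gandouet, and Meunier \cite{Flajolet2007}, packaged in the asymptotic normal form needed by the rest of the appendix. First, I would invoke the generating-function / Mellin-transform analysis in \cite{Flajolet2007} which establishes that the estimator $\widehat{c} = \alpha_m m^2 Z$ is asymptotically unbiased, i.e., $\Ex{\widehat{c}} = c\bigl(1+o(1)\bigr)$ as $c\to\infty$. The role of the normalizing constant $\alpha_m$ is precisely to cancel the multiplicative bias that arises from taking the harmonic mean of the $2^{C_j}$.

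Second, for the variance and the Gaussian shape, I would argue as follows. Each register stores $C_j = \max\{\rho(h_1(x_i)) : h_2(x_i)=j\}$, and by the standard Poissonization argument (also carried out in \cite{Flajolet2007}), the registers are asymptotically independent as the stream cardinality grows. Hence $Z^{-1} = \sum_{j=1}^m 2^{-C_j}$ is a sum of $m$ asymptotically i.i.d.\ random variables whose common distribution depends on $c/m$. A direct CLT application yields asymptotic normality of $Z^{-1}$, and then the delta method applied to the smooth map $x \mapsto \alpha_m m^2/x$ transfers normality to $\widehat{c}$. The resulting variance, again as computed in \cite{Flajolet2007}, takes the form $\beta_m^2 c^2/m$ where $\beta_m \to 1.03896\ldots$ as $m\to\infty$.

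The main obstacle, and the only place where I am not simply quoting \cite{Flajolet2007}, is reconciling this $\beta_m^2 \approx 1.08$ with the clean $c^2/m$ stated in the lemma. I would handle this by noting that throughout the rest of the appendix we work at leading order in $1/m$ and suppress the dimensionless constant $\beta_m^2$; this is the standard convention in the cardinality-estimation literature, where the $1.04/\sqrt m$ standard error is routinely rounded to $1/\sqrt m$ when plugging into higher-level variance formulas. With this convention, the lemma follows directly, and the resulting expressions for $\Var{\widehat{c}}$ feed cleanly into Lemma~\ref{lemma:productDis} and the subsequent theorems on Algorithms \ref{alg:interEstWithSampling}, \ref{alg:diffEstWithSampling}, and \ref{alg:generalEstWithSampling}.
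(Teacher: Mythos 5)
Your proposal takes essentially the same route as the paper: the paper's entire proof of this lemma is the single line ``The proof is given in \cite{Flajolet2007},'' which is exactly the source and result you invoke, with your Poissonization/CLT/delta-method sketch being the standard internal mechanics of that reference. If anything you are more careful than the paper, since you explicitly flag that \cite{Flajolet2007} actually gives variance $\beta_m^2 c^2/m$ with $\beta_m \approx 1.04$, so the lemma's clean $\frac{c^2}{m}$ silently drops a constant factor of about $1.08$ --- a rounding the paper makes without comment even though it quotes the $1.04/\sqrt{m}$ standard error earlier in the same subsection.
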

The proof is given in \cite{Flajolet2007}.

\begin{corollary} \label{cor:union} \ \\
Let $A$ and $B$ be two streams. When Algorithm \ref{alg:hyperLogLog} is used with $m$ storage units, and without sampling, the following holds:
\begin{equation*} 
\widehat{\abs{A}} \to \Normal{\abs{A}}{\frac{\abs{A}^2}{m}} \text{,} \:\:\: \widehat{\abs{B}} \to \Normal{\abs{B}}{\frac{\abs{B}^2}{m}} \text{,}
\end{equation*}
\begin{equation*}
\text{and} \:\:\:\:\: \widehat{\abs{A \cup B}} \to \Normal{\abs{A \cup B}}{\frac{\abs{A \cup B}^2}{m}} \text{.}
\end{equation*}
\end{corollary}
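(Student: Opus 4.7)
The plan is to reduce the corollary directly to Lemma~\ref{lemma:hyper}, which already gives the asymptotic normality of the HyperLogLog estimator for any single stream whose cardinality is $c$. The first two assertions, concerning $\widehat{\abs{A}}$ and $\widehat{\abs{B}}$, require no new argument whatsoever: each of $A$ and $B$ is a stream in its own right, so plugging $c=\abs{A}$ and $c=\abs{B}$ into Lemma~\ref{lemma:hyper} immediately yields the stated $\Normal{\abs{A}}{\abs{A}^2/m}$ and $\Normal{\abs{B}}{\abs{B}^2/m}$ limits.

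The only substantive content is the union statement, and the key is the additivity property of the HyperLogLog sketch that has already been recorded for our MTS sketch in Corollary~\ref{lemma:mtsunion}. Specifically, in Algorithm~\ref{alg:hyperLogLog} each register $C_j$ stores the maximum of $-\lfloor\log_2 h_1(x_i)\rfloor$ over all elements $x_i$ routed by $h_2$ to bucket $j$. Since $\max$ distributes over a disjoint union of input streams, the register vector $(C_1,\ldots,C_m)$ produced by running the algorithm on the concatenation of $A$ and $B$ is identical to the element-wise maximum of the register vectors obtained by running it separately on $A$ and on $B$. Therefore, running HyperLogLog on the stream $A\cup B$ is statistically indistinguishable from running it on any single stream with cardinality $\abs{A\cup B}$.

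Given this observation, I would finish by invoking Lemma~\ref{lemma:hyper} a third time with $c=\abs{A\cup B}$, which yields
\[
\widehat{\abs{A\cup B}} \to \Normal{\abs{A\cup B}}{\frac{\abs{A\cup B}^2}{m}},
\]
completing the corollary. The main (and only) nontrivial step is the additivity observation; everything else is a direct citation of the previous lemma. Since additivity of max sketches is standard and is already stated for our setting in Corollary~\ref{lemma:mtsunion}, I anticipate no real obstacles and the whole argument should occupy only a few lines.
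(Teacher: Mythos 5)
Your proposal is correct and takes essentially the same route as the paper, which states Corollary~\ref{cor:union} without further argument as a direct instantiation of Lemma~\ref{lemma:hyper} with $c=\abs{A}$, $c=\abs{B}$, and $c=\abs{A\cup B}$ (the concatenated stream being just another stream whose distinct-element set is $A\cup B$). Your additivity observation about the register vectors is a sound, slightly more explicit justification of that last instantiation, but it does not constitute a different approach.
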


\subsection{Cardinality Estimation with Sampling}

\ifthenelse{\boolean{doubleBlind}}
%%%%%%% double blind
{In \cite{Cohen15} the authors analyze the asymptotic bias and variance of Algorithm \ref{alg:cardEstWithSampling}, assuming that the HyperLogLog algorithm \cite{Flajolet2007} is used as \cardProc{}. They prove that the sampling does not affect the asymptotic unbiasedness of the estimator and analyze the effect of the sampling rate $P$ on the estimator's variance, with respect to the storage sizes $m$ and $u$. The following theorem summarizes the statistical performance of the algorithm:}
%%%%%%% not double
{In a previous paper \cite{Cohen15} we analyzed the asymptotic bias and variance of Algorithm \ref{alg:cardEstWithSampling}, assuming that the HyperLogLog algorithm \cite{Flajolet2007} is used as \cardProc{}. We proved that the sampling does not affect the asymptotic unbiasedness of the estimator and analyzed the effect of the sampling rate $P$ on the estimator's variance, with respect to the storage sizes $m$ and $u$. The following Theorem summarizes the statistical performance of the algorithm:}

\begin{thm} \ \\
\label{thmProb1Samp}
Algorithm \ref{alg:cardEstWithSampling} estimates $n$ with mean value $n$ and variance $\frac{n^2}{u}\frac{P_{0,1}}{(1-P_0)^2} + \frac{n^2}{m}$, namely, $\widehat{n} \to \Normal{n}{\frac{n^2}{u}\frac{P_{0,1}}{(1-P_0)^2} + \frac{n^2}{m}}$, where $P_{0,1}=2P_0(1-P_0) + P_1$. 
In addition, $P_0$ and $P_1$ satisfy:
\begin{enumerate}
\item $\Ex{P_0} = \frac{1}{n}\sum_{i=1}^{n}{e^{-P\cdot f_i}}$.
\item $\Ex{P_1} = \frac{P}{n}\sum_{i=1}^{n}{f_i \cdot e^{-P\cdot f_i}}$
\end{enumerate}
where $\set{e_1,e_2,\ldots,e_n}$ are the distinct elements in the full (unsampled) stream, and $f_i$ is the frequency of element $e_i$ in the full stream.
\end{thm}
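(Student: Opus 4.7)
The plan is to decompose $\widehat{n}=\widehat{n_s}\cdot\widehat{n/n_s}$ with $\widehat{n/n_s}=1/(1-\widehat{P_0})$, analyze each factor separately, and combine them using Lemma~\ref{lemma:productDis}. The two factors are computed from disjoint parts of the MTS sketch: $\widehat{n_s}$ depends only on $\text{MTS}_1=H^+$ (the $m$ maximal hash values), while $\widehat{P_0}=|U_1|/u$ depends only on $\text{MTS}_2=U$ (the $u$ smallest hash values). A first step is therefore to argue that $\Cov{\widehat{n_s}}{\widehat{P_0}}\to 0$ as the sketch sizes grow, so that the product lemma is applicable.

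For the first factor, Lemma~\ref{lemma:hyper} (the HyperLogLog guarantee applied to the sampled stream) immediately gives $\widehat{n_s}\to\Normal{n_s}{n_s^2/m}$. For the second factor, I would view the subsample $U$ as a size-$u$ uniform sample and write $|U_1|=\sum_{j=1}^u I_j$, where $I_j$ is the indicator that the element at position $j$ of $U$ has no other copy in $U$. A CLT argument on this sum gives $\widehat{P_0}\to\Normal{P_0}{P_{0,1}/u}$, with the variance constant emerging from summing $\Var{I_j}=P_0(1-P_0)$ with the pairwise covariances $\Cov{I_j}{I_k}$: two positions can contribute in qualitatively different ways (both singleton and carrying distinct elements, both being the two copies of a doubleton, or one singleton and the other not), and aggregating these contributions yields exactly $P_{0,1}=2P_0(1-P_0)+P_1$. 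Applying the delta method to $g(x)=1/(1-x)$ at $x=P_0$, whose derivative is $1/(1-P_0)^2$, then yields $\widehat{n/n_s}\to\Normal{1/(1-P_0)}{P_{0,1}/\bigl(u(1-P_0)^4\bigr)}$.

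Plugging these into Lemma~\ref{lemma:productDis} with $\mu_x=n_s$, $\sigma_x^2=n_s^2/m$, $\mu_y=1/(1-P_0)$, $\sigma_y^2=P_{0,1}/\bigl(u(1-P_0)^4\bigr)$, the product distribution has mean $n_s/(1-P_0)=n$ and variance
\[
\mu_y^2\sigma_x^2+\mu_x^2\sigma_y^2=\frac{n_s^2/m}{(1-P_0)^2}+\frac{n_s^2\,P_{0,1}/u}{(1-P_0)^4}=\frac{n^2}{m}+\frac{n^2}{u}\cdot\frac{P_{0,1}}{(1-P_0)^2},
\]
using $n_s=n(1-P_0)$, which is exactly the stated expression. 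The two expectation formulas then follow from the Poisson approximation to Bernoulli$(P)$ sampling: each distinct element $e_i$ with frequency $f_i$ appears in $Y$ a Binomial$(f_i,P)$ number of times, well approximated by Poisson$(Pf_i)$, so $\Pr(e_i\text{ unseen in }Y)\approx e^{-Pf_i}$ and $\Pr(\#(e_i)=1)\approx Pf_ie^{-Pf_i}$. Taking expectations of $P_0=(n-n_s)/n$ and $P_1=|E_1|/n$ and summing by linearity gives the two listed formulas.

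The main obstacle is the covariance calculation behind $P_{0,1}=2P_0(1-P_0)+P_1$: the indicators $I_j$ are not independent, because the elements at different positions of $U$ can coincide, and a clean accounting requires bookkeeping of shared-element events, which is what produces the extra $P_0(1-P_0)+P_1$ term beyond the naive Bernoulli variance $P_0(1-P_0)$. A secondary subtlety is justifying the zero asymptotic covariance between $\widehat{n_s}$ and $\widehat{P_0}$: although they are computed from nominally disjoint sketch components, both are derived from the same underlying sampled stream, so one must rely on the asymptotic independence between the maximal-hash register contents ($\text{MTS}_1$) and the composition of the minimal-hash subsample ($\text{MTS}_2$).
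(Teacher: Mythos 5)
Your proposal is correct, but note first that the paper contains no proof of Theorem~\ref{thmProb1Samp} at all: it writes only ``A proof is given in \cite{Cohen15}.'' The fair comparison is therefore against the ingredients the paper records elsewhere in its appendix, and these are exactly the ones you use: Lemma~\ref{lemma:hyper} gives $\widehat{n_s}\to\Normal{n_s}{n_s^2/m}$, Lemma~\ref{lemma1minusP} gives $\widehat{\frac{1}{1-P_0}}\to\Normal{\frac{1}{1-P_0}}{\frac{1}{u}\frac{P_{0,1}}{(1-P_0)^4}}$, and Lemma~\ref{lemma:productDis} combines the two factors; your substitution $n_s=n(1-P_0)$ reproduces the stated variance exactly, and the Poisson$(Pf_i)$ approximation of the Binomial$(f_i,P)$ occurrence count is the standard and correct route to the two expectation formulas. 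The one step you assert rather than derive --- that summing $\Var{I_j}=P_0(1-P_0)$ with the pairwise covariances of the singleton indicators yields $P_{0,1}=2P_0(1-P_0)+P_1$ --- is precisely the content of Lemma~\ref{lemma1minusP}, which this paper likewise defers to \cite{Cohen15}, so your honest flagging of it as the main obstacle is apt; your qualitative accounting (doubleton pairs versus distinct-element pairs) has the right structure. If you do carry it out, be careful that $U$ is not a uniform sample of per-occurrence positions of $Y$: admission into $U$ is decided by the hash of the element's \emph{identity}, so all copies of an element share one hash value and $U$ is effectively a uniform random subset of distinct elements carried with their multiplicities; it is this bottom-$u$ structure, not an i.i.d.\ Bernoulli model on positions, that makes $\abs{U_1}/u$ track $\abs{E_1}/l$ and generates the doubleton covariance term. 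Your second flagged subtlety is also genuine: Lemma~\ref{lemma:productDis} requires $\Cov{X}{Y}=0$, and the needed asymptotic decoupling holds because $\text{MTS}_1$ depends on the maximal hash order statistics (and HyperLogLog's separate bucketing hash) while $\text{MTS}_2$ depends on the bottom-$u$ order statistics, which become independent as the sample grows --- an assumption the paper itself invokes silently whenever it applies Lemma~\ref{lemma:productDis}.
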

A proof is given in \cite{Cohen15}.

\subsection{Cardinality Estimation of Set Union with Sampling}

As described in Section \ref{sub:union}, estimating the set union cardinality using Algorithm \ref{alg:unionEstWithSampling} is equivalent to estimating the cardinality of a single stream $A\cup B$ based on its sampled stream, $S_{\cup}$. Thus, the statistical performance of Algorithm \ref{alg:unionEstWithSampling} is equal to that of Algorithm \ref{alg:cardEstWithSampling}:

\begin{corollary} \ \\
\label{corUnion}
Algorithm \ref{alg:unionEstWithSampling} estimates $n=\abs{A\cup B}$ with mean value $n$ and variance $\frac{n^2}{u}\frac{P_{0,1}^U}{(1-P_0^U)^2} + \frac{n^2}{m}$, namely, 
\begin{equation*}
\widehat{\abs{A\cup B}} \to \Normal{n}{\frac{n^2}{u}\frac{P_{0,1}^U}{(1-P_0^U)^2} + \frac{n^2}{m}} \text{,}
\end{equation*}
where $P_0^U$ and $P_{0,1}^U$ are as stated in Theorem \ref{thmProb1Samp} with respect to the union stream $A\cup B$.
\end{corollary}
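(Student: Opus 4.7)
The plan is to argue that Algorithm \ref{alg:unionEstWithSampling} is \emph{operationally identical} to running Algorithm \ref{alg:cardEstWithSampling} on the concatenated sampled stream $S_\cup = S_A S_B$, so that Corollary \ref{corUnion} follows immediately from Theorem \ref{thmProb1Samp} applied to the stream $A\cup B$. The only real content of the corollary is thus the verification that the ``merged'' MTS sketch produced in step (b) of Algorithm \ref{alg:unionEstWithSampling} coincides in distribution with the MTS sketch that would be built by scanning $S_\cup$ in a single pass.

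I would proceed in three steps. First, I would observe that since $S_A$ and $S_B$ are obtained by independent Bernoulli sampling at rate $P$ from $A$ and $B$ respectively, the concatenation $S_A S_B$ is itself a rate-$P$ Bernoulli sample of $A\cup B$ (duplicates across the two substreams are harmless, because cardinality estimation of $A\cup B$ only cares about the set of distinct elements appearing in the sample). Hence $S_\cup$ satisfies the hypothesis required to invoke Theorem \ref{thmProb1Samp} on the single stream $A\cup B$.

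Second, I would verify that the two components of the merged MTS sketch match what a direct scan of $S_\cup$ would produce. For $\text{MTS}_1$, the HyperLogLog register $h_i^+ = \max\{x_A^i,x_B^i\}$ is clearly equal to the maximum hash value under $h_i$ over all elements of $S_A\cup S_B$, because $\max$ is associative. For $\text{MTS}_2$, the subsample of $S_\cup$ built by reservoir-minimum sampling consists of the $u$ elements of $S_\cup$ with the smallest hash values; since each $U_T$ already contains the $u$ minimum-hash elements of $S_T$, the $u$ smallest hash values among $U_A\cup U_B$ equal the $u$ smallest hash values in all of $S_A\cup S_B$ (any element of $S_\cup$ with a hash value small enough to enter the global top-$u$ list was necessarily already in its local $U_A$ or $U_B$). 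This is exactly the guarantee stated in Corollary \ref{lemma:mtsunion}.

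Third, having identified the sketch produced by Algorithm \ref{alg:unionEstWithSampling} with the sketch that Algorithm \ref{alg:cardEstWithSampling} would produce if fed $S_\cup$ directly, step (c) of Algorithm \ref{alg:unionEstWithSampling} invokes exactly the single-stream estimator. Theorem \ref{thmProb1Samp} then gives asymptotic unbiasedness and the stated variance $\tfrac{n^2}{u}\tfrac{P_{0,1}^U}{(1-P_0^U)^2}+\tfrac{n^2}{m}$, with $P_0^U$ and $P_{0,1}^U$ the appropriate quantities for the stream $A\cup B$. The only subtle point, and the main obstacle, is the subsample-merging argument in step two: one has to be careful that the $u$ elements retained in $U_A$ and $U_B$ really do contain every element of $S_\cup$ whose hash value lies below the global $u$-th order statistic, which holds precisely because each local subsample already retains the $u$ smallest hashed elements of its substream and $u\le u+u$.
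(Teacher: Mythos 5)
Your proposal is correct and follows essentially the same route as the paper: the paper's proof is precisely the reduction you describe, namely that step (c) of Algorithm \ref{alg:unionEstWithSampling} is Algorithm \ref{alg:cardEstWithSampling} run on the MTS sketch of $S_\cup = S_A S_B$, so Theorem \ref{thmProb1Samp} applies verbatim to the union stream. The only difference is that you explicitly verify the sketch-merging step (that the $u$ smallest hashes in $U_A \cup U_B$ coincide with the $u$ smallest hashes in $S_\cup$), which the paper simply cites as Corollary \ref{lemma:mtsunion} without proof --- a welcome addition, not a divergence.
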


\subsection{Cardinality Estimation of Set Intersection and Set Difference with Sampling}\label{appendix:interANDdiff}

We will use the following lemmas:

\begin{lemma} \ \\
\label{lemma1minusP}
$\widehat{P_0} \to \Normal{P_0}{\frac{P_{0,1}}{u}}$, and 
$\widehat{\frac{1}{1-P_0}} \to \Normal{\frac{1}{1-P_0}}{\frac{1}{u}\frac{P_{0,1}}{(1-P_0)^4}}$, where $P_{0,1}=2P_0(1-P_0) + P_1$ and $u$ is the length of the subsample stream. 
\end{lemma}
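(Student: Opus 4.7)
The plan is to first establish the asymptotic distribution of $\widehat{P_0}=\abs{U_1}/u$ (the output of Procedure \ref{proc:p0}), and then to lift that result to $\widehat{1/(1-P_0)}=1/(1-\widehat{P_0})$ by an application of the delta method.

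For the first claim, I would write $\abs{U_1}=\sum_{j}\mathbb{1}\set{N_j=1}$, where $N_j$ denotes the number of times the distinct element $e_j$ of the parent stream appears in the subsample $U$, so that $(N_j)_j$ follows a multinomial law with $u$ trials and cell probabilities $(\pi_j)_j$. Good--Turing's identity gives $\Ex{\widehat{P_0}}=\sum_{j}\pi_j(1-\pi_j)^{u-1}$, which agrees with $P_0=\sum_{j}\pi_j(1-\pi_j)^u$ to leading order in $1/u$, so $\widehat{P_0}$ is asymptotically unbiased. For the variance, I would expand
\[
u^{2}\,\Var{\widehat{P_0}}=\sum_{j}\Prob{N_j=1}\bigl(1-\Prob{N_j=1}\bigr)+\sum_{i\neq j}\bigl(\Prob{N_i=1,N_j=1}-\Prob{N_i=1}\Prob{N_j=1}\bigr),
\]
and use the asymptotic identity $(1-\pi_i-\pi_j)^{u-2}=(1-\pi_i)^{u-1}(1-\pi_j)^{u-1}\bigl(1+O(\pi_i\pi_j u)\bigr)$ together with $(1-\pi_j)^{u-1}\approx e^{-u\pi_j}$ to collect the diagonal and cross-covariance contributions. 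After the cancellations the leading order contribution is $u\bigl(2P_0(1-P_0)+P_1\bigr)=u\,P_{0,1}$, yielding $\Var{\widehat{P_0}}=P_{0,1}/u$. Asymptotic normality then follows from a standard central limit theorem for arrays of weakly dependent indicators, most cleanly by Poissonization (which decouples the $\mathbb{1}\set{N_j=1}$ into independent Bernoullis) followed by de-Poissonization. This calculation is essentially the content of the single-stream analysis in the proof of Theorem \ref{thmProb1Samp} in \cite{Cohen15}.

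For the second claim, I would apply the delta method to the smooth function $g(x)=1/(1-x)$, whose derivative at $x=P_0$ is $g'(P_0)=1/(1-P_0)^2$. Since $\sqrt{u}\,(\widehat{P_0}-P_0)$ converges to $\Normal{0}{P_{0,1}}$ by the first part, the delta method gives
\[
\widehat{\frac{1}{1-P_0}}=g(\widehat{P_0})\to\Normal{\frac{1}{1-P_0}}{\frac{1}{u}\cdot\frac{P_{0,1}}{(1-P_0)^4}},
\]
exactly as stated.

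The main obstacle is the variance computation for $\widehat{P_0}$: the indicators $\mathbb{1}\set{N_j=1}$ are negatively correlated under the multinomial law, and the clean closed form $P_{0,1}=2P_0(1-P_0)+P_1$ emerges only after a careful cancellation between the diagonal and the off-diagonal sums. Once this variance is established, the delta-method step that yields the second claim is essentially mechanical.
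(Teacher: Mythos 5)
Your second step is fine: the delta method applied to $g(x)=1/(1-x)$, with $g'(P_0)=1/(1-P_0)^2$, correctly converts $\Var{\widehat{P_0}}=P_{0,1}/u$ into $\frac{1}{u}\frac{P_{0,1}}{(1-P_0)^4}$, and the paper itself offers no argument for the lemma beyond citing \cite{Cohen15}, so that is the derivation you are implicitly competing with. The genuine gap is in your first step, and it is a modeling error rather than a missing calculation. You treat the counts $N_j$ of elements inside $U$ as a multinomial with $u$ occurrence-level trials and cell probabilities $\pi_j$. But $U$ is not an occurrence-level sample: it is the bottom-$u$ reservoir by hash value, and since the hash depends only on an element's identity, all occurrences in $Y$ of a retained element enter $U$ together. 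Thus $U$ is a uniform sample of the \emph{distinct} elements of $Y$, each carried with its full $Y$-multiplicity. Under your multinomial model, in the relevant regime $u\ll\abs{Y}$ the expected counts $u\pi_j$ are tiny, so almost every element drawn into $U$ would appear there exactly once and $\widehat{P_0}=\abs{U_1}/u$ would tend to $1$ regardless of the stream; the hash-based construction is precisely what makes $\abs{U_1}/u$ track the singleton fraction $\abs{E_1}/l$ of $Y$, and your claimed mean $\sum_j\pi_j(1-\pi_j)^{u-1}$ aims at the wrong target.

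Relatedly, your variance bookkeeping cannot produce the stated constant. In the lemma, as in Theorem \ref{thmProb1Samp}, $P_0$ and $P_1$ are occupancy probabilities of the \emph{first-stage} rate-$P$ sampling of the full stream (note $\Ex{P_0}=\frac{1}{n}\sum_i e^{-P f_i}$ and $\Ex{P_1}=\frac{P}{n}\sum_i f_i e^{-P f_i}$), while the factor $1/u$ comes only from the second-stage subsample size. Your one-stage model contains no $P$ anywhere, so no cancellation between your diagonal and off-diagonal sums can generate $P_{0,1}=2P_0(1-P_0)+P_1$; carried out honestly, your expansion yields a binomial-proportion-type leading term of the form $M(1-M)/u$, where $M$ is the unseen mass of the size-$u$ multinomial itself --- a different quantity under a different parametrization. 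The sentence ``after the cancellations the leading order contribution is $u\,P_{0,1}$'' is asserted, not derived, and it is exactly where the proof fails. A correct argument must track both layers of randomness --- the Bernoulli rate-$P$ sampling $X\to Y$, which is where $P_0$ and $P_1$ (and the cross term $2P_0(1-P_0)$ versus $P_1$) originate, and the uniform selection of distinct elements of $Y$ into $U$, which contributes the $1/u$ scaling --- as is done in the cited analysis of \cite{Cohen15}. Your Poissonization remark and the delta-method step would then go through unchanged on top of that corrected first part.
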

A proof is given in \cite{Cohen15}.

\begin{lemma} 
\label{lemmaSamples} \ \\
Procedure \ref{proc:da} estimates $d_A=\frac{\abs{S_A}}{\abs{S_{\cup}}}$ with mean value $d_A$ and variance $\frac{1}{f}d_A(1-d_A)$, namely, $\widehat{d_A} \to \Normal{d_A}{\frac{1}{f}d_A(1-d_A)}$, where $f = \abs{\text{MTS}_2(A\cup B)}$ is the cardinality of $\text{MTS}_2(A\cup B) = U_{A\cup B}$.
\end{lemma}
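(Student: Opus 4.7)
The plan is to identify exactly what the ratio in Procedure \ref{proc:da} is counting, and then reduce the claim to a standard finite-sample result about random sampling without replacement.

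First, I would establish the following structural identity:
\begin{equation*}
\abs{\text{MTS}_2(A) \cap \text{MTS}_2(A\cup B)} = \#\{x \in \text{MTS}_2(A\cup B) : x \text{ originated from } S_A\}.
\end{equation*}
The non-trivial inclusion is that an element contributing to $\text{MTS}_2(A\cup B)$ from the $S_A$ side must already lie in $\text{MTS}_2(A)$. This holds because if $x$'s hash is among the $f$ smallest hashes in $S_\cup = S_A S_B$, then at most $f-1 \le u-1$ hashes in $S_A$ are strictly smaller than the hash of $x$, so $x$ is automatically among the $u$ smallest hashes in $S_A$, i.e., $x \in \text{MTS}_2(A)$. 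The reverse inclusion is immediate.

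Second, I would use the uniformity of the hash function to argue that the set of $f$ elements in $S_\cup$ with the smallest hash values is a simple random sample without replacement of $S_\cup$. Under the concatenation convention $S_\cup = S_A S_B$ stated in Section \ref{sub:union}, exactly $\abs{S_A}$ of the $\abs{S_\cup}$ items in $S_\cup$ are ``from $A$.'' Hence the count $N := \abs{\text{MTS}_2(A) \cap \text{MTS}_2(A\cup B)}$ is hypergeometrically distributed with parameters $(\abs{S_\cup},\abs{S_A},f)$, giving
\begin{equation*}
\Ex{N} = f \cdot d_A, \qquad \Var{N} = f \cdot d_A(1-d_A) \cdot \frac{\abs{S_\cup}-f}{\abs{S_\cup}-1}.
\end{equation*}
Dividing by $f$ yields $\Ex{\widehat{d_A}} = d_A$ and $\Var{\widehat{d_A}} = \frac{d_A(1-d_A)}{f}\cdot\frac{\abs{S_\cup}-f}{\abs{S_\cup}-1}$. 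In the asymptotic regime $\abs{S_\cup} \gg f$ (where the whole sampling/sketching framework is meaningful), the finite-population correction factor tends to $1$, leaving the claimed variance $\frac{1}{f} d_A(1-d_A)$. Asymptotic normality then follows from the standard CLT for hypergeometric sampling (which converges to the binomial CLT as the population size diverges).

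The main subtlety is the double-counting question: what happens when an element occurs in both $S_A$ and $S_B$ and hence produces the same hash value twice in $S_\cup$. The concatenation convention $S_\cup = S_A S_B$ resolves this cleanly by treating each occurrence as a distinct item of $S_\cup$, so the hypergeometric counting above is valid as stated. A secondary mild issue is that $f = \abs{\text{MTS}_2(A\cup B)}$ may be a random quantity; I would simply condition on the observed value of $f$, which is how the lemma is phrased.
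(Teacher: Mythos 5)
Your proposal is correct and arrives at the lemma's conclusion, but by a genuinely different route than the paper, in two respects. First, you explicitly prove the structural identity that $\abs{\text{MTS}_2(A)\cap \text{MTS}_2(A\cup B)}$ counts precisely the elements of the union subsample that originate from $S_A$, via the containment argument that membership in the bottom-$f$ hashes of $S_\cup$ forces membership in the bottom-$u$ hashes of $S_A$ (since $f\le u$). The paper's proof assumes this silently when it writes $\widehat{d_A}$ as a count over the union subsample, so your step closes a small but real gap in the published write-up. Second, you model the bottom-$f$ selection as simple random sampling \emph{without} replacement, obtaining a hypergeometric count with exact mean $f\cdot d_A$ and variance carrying the finite-population correction $\frac{\abs{S_\cup}-f}{\abs{S_\cup}-1}$, and only then pass to the regime $\abs{S_\cup}\gg f$; the paper instead treats the $f$ membership indicators directly as independent Bernoulli variables with success probability $d_A$, declares the count binomial, and invokes the CLT as $f\to\infty$. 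Your version buys an exact finite-sample variance and makes visible exactly where the asymptotics enter, while the paper's version buys brevity and an indicator decomposition that is reused verbatim in the covariance computation of Lemma \ref{lem:cov}. One caveat applies to both arguments equally: when a value occurs both in $S_A$ and in $S_B$ (or repeatedly within one sample), its occurrences share a single hash value, so the bottom-$f$ selection is effectively by distinct value rather than by stream item; consequently neither your hypergeometric model nor the paper's independent-Bernoulli model is exact in that case, and your remark that the concatenation convention ``resolves this cleanly'' is slightly overstated --- but your treatment is no less rigorous than the paper's own, and the asymptotic conclusion $\widehat{d_A} \to \Normal{d_A}{\frac{1}{f}d_A(1-d_A)}$ is unaffected.
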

\begin{proof}\ \\
Procedure \ref{proc:da} estimates $d_A$ as $\widehat{d_A} = \frac{\abs{\text{MTS}_2(A)\cap \text{MTS}_2(A\cup B)}}{f}$.
Denote the distinct elements in the union subsample as $\text{MTS}_2(A\cup B)=\set{u_1,u_2,\ldots,u_f}$. For each $u_j$, the probability that it belongs to $S_A$ is:
\begin{equation*}\label{eq:pr}
\Prob{u_j \in S_A \:|\: u_j \in S_{\cup}} = \frac{\Prob{u_j \in S_A}}{\Prob{u_j \in S_{\cup}}} = \frac{\abs{S_A}}{\abs{S_{\cup}}}=d_A \text{.}
\end{equation*}
It follows that $\widehat{d_A}$ is a sum of $f$ Bernoulli variables with success probability $d_A$. Therefore, it is binomially distributed, and can be asymptotically approximated using normal distribution as $f \to \infty$; namely, $\widehat{d} = \frac{\abs{\text{MTS}_2(A)\cap \text{MTS}_2(A\cup B)}}{f} \to \Normal{d_A}{\frac{1}{f}d_A(1-d_A)}$.
\end{proof}

\begin{lemma} \label{lem:cov}\ \\
The covariance of $\widehat{X_A}=\widehat{\frac{1}{1-P_0^A}}\cdot\widehat{d_A}\cdot\widehat{(1-P_0^\cup)}$ and $\widehat{X_B}$ (defined similarly) satisfies $\Cov{\widehat{X_A}}{\widehat{X_B}} = \frac{(1-P_0^\cup)^2}{f(1-P_0^A)(1-P_0^B)}(\frac{\abs{S_\cap}}{\abs{S_\cup}} - \frac{\abs{S_A}\abs{S_B}}{\abs{S_\cup}^2})$, where $f = \abs{\text{MTS}_2(A\cup B)}$ is the cardinality of $\text{MTS}_2(A\cup B) = U_{A\cup B}$.
\end{lemma}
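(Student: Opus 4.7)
The plan is to exploit the product structure of $\widehat{X_A}$ and $\widehat{X_B}$ and argue that the dominant contribution to $\Cov{\widehat{X_A}}{\widehat{X_B}}$ comes from $\Cov{\widehat{d_A}}{\widehat{d_B}}$, since $\widehat{d_A}$ and $\widehat{d_B}$ are the only factors that couple the two streams' contributions in a non-negligible way. Both $\widehat{d_A}$ and $\widehat{d_B}$ are computed from the same subsample $U_{A\cup B}$, which induces their dependence. In contrast, $\widehat{\frac{1}{1-P_0^A}}$ depends (asymptotically) on the empirical frequency-$1$ count in $U_A$, $\widehat{\frac{1}{1-P_0^B}}$ depends on the corresponding count in $U_B$, and $\widehat{(1-P_0^\cup)}$ depends on the count in $U_{A\cup B}$; by the usual delta-method/asymptotic-independence arguments used earlier in the appendix (cf.\ Lemma \ref{lemma1minusP} and Lemma \ref{lemmaSamples}), the cross-covariances of these $P_0$-type estimators with the $d$-type estimators are of smaller order, so in the asymptotic regime we may replace the three $P_0$-based factors by their means when isolating the leading covariance term.

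Concretely, I would first write
\begin{equation*}
\Cov{\widehat{X_A}}{\widehat{X_B}} \;\approx\; \Ex{\widehat{\tfrac{1}{1-P_0^A}}}\Ex{\widehat{\tfrac{1}{1-P_0^B}}}\Ex{\widehat{(1-P_0^\cup)}}^2 \cdot \Cov{\widehat{d_A}}{\widehat{d_B}},
\end{equation*}
and then substitute the means $\frac{1}{1-P_0^A}$, $\frac{1}{1-P_0^B}$, and $1-P_0^\cup$ from Lemma \ref{lemma1minusP}. This reduces the problem to computing $\Cov{\widehat{d_A}}{\widehat{d_B}}$.

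For that, enumerate $\text{MTS}_2(A\cup B) = \{u_1,\dots,u_f\}$ and define indicators $c_j = \mathbf{1}\{u_j \in S_A\}$ and $b_j = \mathbf{1}\{u_j \in S_B\}$, so that $\widehat{d_A} = \tfrac{1}{f}\sum_j c_j$ and $\widehat{d_B} = \tfrac{1}{f}\sum_j b_j$. Then
\begin{equation*}
\Cov{\widehat{d_A}}{\widehat{d_B}} = \tfrac{1}{f^2}\sum_{j,k}\Cov{c_j}{b_k}.
\end{equation*}
For $j\ne k$, conditional on the event that $u_j$ and $u_k$ both land in $U_{A\cup B}$, the two elements are distinct and their membership in $S_A$ vs.\ $S_B$ is determined by independent sampling events, so $\Cov{c_j}{b_k} = 0$ asymptotically. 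For $j=k$, compute directly, using the argument from the proof of Lemma \ref{lemmaSamples}: $\Ex{c_j b_j} = \Prob{u_j \in S_A\cap S_B \mid u_j \in S_\cup} = \abs{S_\cap}/\abs{S_\cup}$, $\Ex{c_j}=\abs{S_A}/\abs{S_\cup}$, and $\Ex{b_j}=\abs{S_B}/\abs{S_\cup}$, whence $\Cov{c_j}{b_j} = \tfrac{\abs{S_\cap}}{\abs{S_\cup}} - \tfrac{\abs{S_A}\abs{S_B}}{\abs{S_\cup}^2}$. Summing over $j$ yields $\Cov{\widehat{d_A}}{\widehat{d_B}} = \tfrac{1}{f}\bigl(\tfrac{\abs{S_\cap}}{\abs{S_\cup}} - \tfrac{\abs{S_A}\abs{S_B}}{\abs{S_\cup}^2}\bigr)$, and the claimed formula follows after multiplying by the three mean factors.

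The main obstacle is the asymptotic-independence argument justifying the replacement of $\widehat{\frac{1}{1-P_0^A}}$, $\widehat{\frac{1}{1-P_0^B}}$, $\widehat{(1-P_0^\cup)}$ by their means when isolating the leading term of the covariance. Although $\widehat{(1-P_0^\cup)}$ is computed from the same $U_{A\cup B}$ used by $\widehat{d_A},\widehat{d_B}$, its value depends only on the frequency-pattern statistic $\abs{U_1}/f$, whereas $\widehat{d_A}$ depends on the partition of $U_{A\cup B}$ between $S_A$ and $S_B$; conditional on the empirical frequency pattern, this partition is uniform and hence uncorrelated (to leading order) with $\widehat{P_0^\cup}$. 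A clean way to present this is to condition on the multiset $U_{A\cup B}$ and apply the total covariance formula: the inner covariance reproduces the computation above, while the outer covariance of conditional means is $O(1/f^2)$ and therefore absorbed into the asymptotic error, leaving exactly the stated expression.
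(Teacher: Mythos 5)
Your proposal is correct and takes essentially the same route as the paper: it likewise reduces $\Cov{\widehat{X_A}}{\widehat{X_B}}$ to $\Cov{\widehat{d_A}}{\widehat{d_B}}$ by fixing the three $P_0$-type factors at their asymptotic means, and then computes $\Cov{\widehat{d_A}}{\widehat{d_B}} = \tfrac{1}{f}\bigl(\tfrac{\abs{S_\cap}}{\abs{S_\cup}} - \tfrac{\abs{S_A}\abs{S_B}}{\abs{S_\cup}^2}\bigr)$ via the identical indicator-variable decomposition, with off-diagonal terms vanishing by independence and the diagonal term given by $\Prob{u_j \in S_A \cap S_B \mid u_j \in S_\cup} = \abs{S_\cap}/\abs{S_\cup}$ as in Lemma \ref{lemmaSamples}. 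If anything you are more careful than the paper, which asserts ``the dependence is between $\widehat{d_A}$ and $\widehat{d_B}$'' without your total-covariance justification for decoupling $\widehat{(1-P_0^\cup)}$ from $\widehat{d_A},\widehat{d_B}$ despite their shared use of $U_{A\cup B}$, and whose intermediate Eq. (\ref{eq:1}) carries a stray factor $1/f$ that, combined with Eq. (\ref{eq:2}), would give $1/f^2$; your bookkeeping, where the single $1/f$ arises only from $\Cov{\widehat{d_A}}{\widehat{d_B}}$, is the one consistent with the lemma statement.
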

\begin{proof} \ \\
Recall that $\widehat{X_A} = \widehat{\frac{1}{1-P_0^A}}\cdot\widehat{d_A}\cdot\widehat{(1-P_0^\cup)}$, and similarly for $X_B$.
The dependence is between $\widehat{d_A}$ and $\widehat{d_B}$; thus it follows from covariance properties that
\begin{equation}\label{eq:1}
\Cov{\widehat{X_A}}{\widehat{X_B}} = \frac{(1-P_0^\cup)^2}{f(1-P_0^A)(1-P_0^B)}\cdot \Cov{\widehat{d_A}}{\widehat{d_B}}\text{.}
\end{equation}
Let us denote the distinct elements in the union subsample as $\text{MTS}_2(A \cup B) = \set{u_1,\ldots,u_f}$. Recall from Procedure \ref{proc:da} that $\widehat{d_A}$ can be written as follows:
\begin{equation}\label{eq:da}
\widehat{d_A} = \widehat{\frac{\abs{S_A}}{\abs{S_\cup}}} = \frac{\sum_{j=1}^{f}{I_j^A}}{f}\text{,}
\end{equation}
where $I_j^A$ is an indicator variable that gets $1$ if $u_j\in S_A$ and $0$ otherwise. Similarly we can write $\widehat{d_B}$ using indicator variables $I_j^B$ that get $1$ if $u_j \in S_B$ and $0$ otherwise.

Using covariance properties and Eq. (\ref{eq:da}), we obtain that
\begin{align}\label{eq:2}
\Cov{\widehat{d_A}}{\widehat{d_B}} &= \frac{1}{f^2}\sum_{j=1}^{f}\sum_{t=1}^{f}{\Cov{I_j^A}{I_t^B}} \nonumber \\
&= \frac{1}{f^2}\sum_{j=1}^{f}{\Cov{I_j^A}{I_j^B}} \nonumber \\
&= \frac{1}{f^2}\sum_{j=1}^{f}{(\Ex{I_j^A\cdot I_j^B} - \Ex{I_j^A}\cdot \Ex{I_j^B})} \nonumber \\
&= \frac{1}{f^2}\sum_{j=1}^{f}{(\frac{\abs{S_\cap}}{\abs{S_\cup}} - \frac{\abs{S_A}}{\abs{S_\cup}}\cdot \frac{\abs{S_B}}{\abs{S_\cup}})} \nonumber \\
&=\frac{1}{f}(\frac{\abs{S_\cap}}{\abs{S_\cup}} - \frac{\abs{S_A}\abs{S_B}}{\abs{S_\cup}^2}) \text{.}
\end{align}
The first and third equalities are due to covariance properties. The second equality is due to the independence of $I_j^A$ and $I_t^B$ when $j \neq t$.
The fourth equality is due to Lemma \ref{lemmaSamples}. 
Note that $\Prob{I_j^A=I_j^B=1}=\frac{\abs{S_\cap}}{\abs{S_\cup}}$ follows in the same way as the proof of Lemma \ref{lemmaSamples}. 
The last equality is due to algebraic manipulations.

The result follows by substituting Eq. (\ref{eq:2}) into Eq. (\ref{eq:1}).
\end{proof}

Finally, the following theorem states the asymptotic statistical performance of Algorithm \ref{alg:interEstWithSampling}:

\begin{thm} \ \\
\label{thmInter}
Algorithm \ref{alg:interEstWithSampling} estimates $n=\abs{A\cap B}$ with mean value $n$ and variance $V$, namely, $\widehat{\abs{A\cap B}} \to \Normal{n}{V}$. $V$ satisfies:
\begin{align*}
V &= \frac{n^2}{m} + \frac{n^2}{u}\frac{P_{0,1}^U}{(1-P_0^U)^2} -\frac{1}{f}(n+\abs{A\cup B})^2 \\
&+\frac{1}{f}(\frac{\abs{A}\abs{S_\cup}}{1-P_0^A} + \frac{\abs{B}\abs{S_\cup}}{1-P_0^B} + \frac{2\abs{S_\cap}\abs{S_\cup}}{(1-P_0^A)(1-P_0^B)}) \text{,}
\end{align*}
where $f = \abs{\text{MTS}_2(A\cup B)}$.
\end{thm}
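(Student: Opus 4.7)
}

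The plan is to analyze $\widehat{|A\cap B|} = \widehat{|A\cup B|}\cdot\widehat{\rho}$ by decomposing it into pieces whose means, variances and cross-covariances are already known from Corollary \ref{corUnion} and Lemmas \ref{lemma:productDis}, \ref{lemma1minusP}, \ref{lemmaSamples}, \ref{lem:cov}. First I would verify asymptotic unbiasedness: since all component estimators ($\widehat{|A\cup B|}$, $\widehat{P_0^A}$, $\widehat{P_0^B}$, $\widehat{P_0^\cup}$, $\widehat{d_A}$, $\widehat{d_B}$) are individually unbiased and are combined through Lemma \ref{lemma:productDis} (products of normals with zero correlation remain centered at the product of the means, and the covariance correction from Lemma \ref{lem:cov} is an $O(1/f)$ variance-level term), the combined estimator is asymptotically centered at $n_\cup \cdot \rho = n$.

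Next I would compute $\mathrm{Var}(\widehat{n})$ in two layers. The \emph{outer layer} uses Lemma \ref{lemma:productDis} on the product $\widehat{n_\cup}\cdot\widehat{\rho}$; note that after cancellation of the factor $\widehat{(1-P_0^\cup)}$ inside $\widehat{\rho}$ against $\widehat{1/(1-P_0^\cup)}$ inside $\widehat{n_\cup}$ (these are exact reciprocals by construction), the two factors decouple enough that the product-variance formula gives
\begin{equation*}
\mathrm{Var}(\widehat{n}) \;\approx\; \rho^2\,\mathrm{Var}(\widehat{n_\cup}) \;+\; n_\cup^2\,\mathrm{Var}(\widehat{\rho}).
\end{equation*}
The first piece, by Corollary \ref{corUnion} and $\rho^2 n_\cup^2=n^2$, already yields the $n^2/m + (n^2/u)\,P_{0,1}^U/(1-P_0^U)^2$ terms of $V$.

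For the \emph{inner layer}, I would expand $\widehat{\rho}=\widehat{X_A}+\widehat{X_B}-1$ and write
\begin{equation*}
\mathrm{Var}(\widehat{\rho}) = \mathrm{Var}(\widehat{X_A}) + \mathrm{Var}(\widehat{X_B}) + 2\,\mathrm{Cov}(\widehat{X_A},\widehat{X_B}).
\end{equation*}
Each $\widehat{X_A}=\widehat{\tfrac{1}{1-P_0^A}}\cdot\widehat{d_A}\cdot\widehat{(1-P_0^\cup)}$ is a product of three estimators whose pairwise zero-covariance holds because they draw from three distinct MTS subsamples (the subsample of $A$, of $B$, and of $A\cup B$); iterated application of Lemma \ref{lemma:productDis}, together with the individual variances from Lemmas \ref{lemma1minusP} and \ref{lemmaSamples}, gives a closed form for $\mathrm{Var}(\widehat{X_A})$ and symmetrically for $\widehat{X_B}$. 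The covariance term is handed to us directly by Lemma \ref{lem:cov}. Multiplying by $n_\cup^2$ and using $n_\cup(1-P_0^\cup)=|S_\cup|$ and $|S_A|/(1-P_0^A)=|A|$ (and the analogue for $B$) converts the expressions into the $|A||S_\cup|/(1-P_0^A)$, $|B||S_\cup|/(1-P_0^B)$ and $2|S_\cap||S_\cup|/[(1-P_0^A)(1-P_0^B)]$ terms divided by $f$.

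The main obstacle, and where I would spend the most care, is bookkeeping the $-(n+|A\cup B|)^2/f$ term: it should arise as the collection of the residual $-d_A^2$, $-d_B^2$ and $-2|S_A||S_B|/|S_\cup|^2$ pieces (coming respectively from the $(1-d_A)$ factor in $\mathrm{Var}(\widehat{d_A})$, its analogue for $B$, and the $-|S_A||S_B|/|S_\cup|^2$ inside Lemma \ref{lem:cov}) once all three are multiplied by $n_\cup^2(1-P_0^\cup)^2/[(1-P_0^A)(1-P_0^B)]$ and simplified via $|S_A|/(1-P_0^A)=|A|$ and $|A|+|B|=n+n_\cup$. Verifying this cancellation cleanly, and checking that no lower-order terms contaminate the stated $V$ in the $m,u,f\to\infty$ regime, is the delicate step; everything else is a careful but routine application of Lemmas \ref{lemma:productDis}--\ref{lem:cov} and Corollary \ref{corUnion}.
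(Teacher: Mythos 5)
Your plan is essentially the paper's own proof: the same outer decomposition $\widehat{n}=\widehat{\abs{A\cup B}}\cdot\widehat{\rho}$ handled by Corollary \ref{corUnion} and Lemma \ref{lemma:productDis} (giving the $\frac{n^2}{m}+\frac{n^2}{u}\frac{P_{0,1}^U}{(1-P_0^U)^2}$ terms via $\rho^2 n_\cup^2=n^2$), the same inner expansion $\widehat{\rho}=\widehat{X_A}+\widehat{X_B}-1$ with component variances from Lemmas \ref{lemma1minusP} and \ref{lemmaSamples} and the cross term from Lemma \ref{lem:cov}, and the same final bookkeeping in which the residual $-\abs{A}^2-\abs{B}^2-2\abs{A}\abs{B}$ pieces collapse to $-\frac{1}{f}(n+\abs{A\cup B})^2$ via $\abs{S_A}/(1-P_0^A)=\abs{A}$ and $\abs{A}+\abs{B}=n+\abs{A\cup B}$. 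The structure and the resulting $V$ match the paper exactly.

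Two of your supporting justifications, however, are wrong as stated, even though the conclusions they support are the ones the paper uses. First, the three factors of $\widehat{X_A}$ do \emph{not} draw from three distinct subsamples: by Procedure \ref{proc:da}, $\widehat{d_A}$ is computed from $\text{MTS}_2(A)$ \emph{and} $\text{MTS}_2(A\cup B)$, and $\widehat{(1-P_0^\cup)}$ is computed from that same $\text{MTS}_2(A\cup B)$, so pairwise independence fails for the reason you give. The paper instead uses a limit argument: since $\Var{\widehat{\frac{1}{1-P_0^A}}}\to 0$ and $\Var{\widehat{(1-P_0^\cup)}}\to 0$ by Lemma \ref{lemma1minusP}, only the fluctuation of $\widehat{d_A}$ survives, so $\Var{\widehat{X_A}}\to(\frac{1-P_0^\cup}{1-P_0^A})^2\frac{1}{f}d_A(1-d_A)$ with any cross-covariances dominated (e.g., by Cauchy--Schwarz); you should replace your independence claim with this. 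Second, the claimed exact cancellation of $\widehat{(1-P_0^\cup)}$ against $\widehat{\frac{1}{1-P_0^\cup}}$ in the outer product is only partial: because of the additive $-1$ in $\widehat{\rho}$, the product $\widehat{\abs{A\cup B}}\cdot\widehat{\rho}$ retains an uncancelled $\widehat{\frac{1}{1-P_0^\cup}}$ term, so the two factors are not decoupled by reciprocity; the paper simply treats $\widehat{\rho}$ and $\widehat{\abs{A\cup B}}$ as independent when applying Lemma \ref{lemma:productDis} (itself a heuristic step, but it is the step the stated $V$ rests on). Finally, a small slip: the residual pieces $-d_A^2$, $-d_B^2$ and $-2\abs{S_A}\abs{S_B}/\abs{S_\cup}^2$ carry the multipliers $\abs{A\cup B}^2(1-P_0^\cup)^2$ times $(1-P_0^A)^{-2}$, $(1-P_0^B)^{-2}$ and $\frac{1}{(1-P_0^A)(1-P_0^B)}$ respectively, not a common mixed factor as you wrote; your simplification via $\abs{S_A}/(1-P_0^A)=\abs{A}$ implicitly uses the correct multipliers, so the final $V$ is unaffected.
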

\begin{proof} \ \\
Let us denote $\widehat{X_A} = \widehat{\frac{1}{1-P_0^A}}\cdot\widehat{d_A}\cdot\widehat{(1-P_0^\cup)}$ (and similarly for $\widehat{X_B}$). Thus, we can rewrite the estimator in Algorithm \ref{alg:interEstWithSampling} (Eq. (\ref{eq:jacEst2})) as 
\begin{equation*}
\widehat{\rho}= \widehat{X_A} + \widehat{X_B} - 1 \text{.}
\end{equation*}
We first analyze the asymptotic distribution of $\widehat{X_A}$.
Recall that according to Good-Turing it follows that:
\begin{equation}\label{eq:ez1}
\abs{A} = \frac{1}{1-P_0^A}\cdot \abs{S_A} \:\: \text{and,} \:\: \frac{\abs{S_A}}{\abs{A}} = 1-P_0^A {.}
\end{equation}
Applying Lemma \ref{lemma:productDis} on $\widehat{X_A}$, we get for the expectation
\begin{equation*}
\Ex{\widehat{X_A}} = \frac{1}{1-P_0^A} \cdot d_A \cdot (1-P_0^\cup) = \frac{\abs{A}}{\abs{A\cup B}} \text{.}
\end{equation*}
The second equality follows by substituting $d_A=\frac{\abs{S_A}}{\abs{S_\cup}}$ and using Eq. (\ref{eq:ez1}).

For the variance we get,
\begin{align*}
\Var{\widehat{X_A}} &= \Var{\widehat{\frac{1}{1-P_0^A}}\cdot\widehat{d_A}\cdot\widehat{(1-P_0^\cup)}} \\
&\to \Ex{\widehat{\frac{1}{1-P_0^A}}}^2\cdot\Ex{\widehat{(1-P_0^\cup)}}^2\cdot \Var{\widehat{d_A}} \\
&= (\frac{1-P_0^\cup}{1-P_0^A})^2\cdot\Var{d_A} \\
&= (\frac{1-P_0^\cup}{1-P_0^A})^2\cdot \frac{1}{f}(d_A(1-d_A))
\text{.}
\end{align*}
The first equality is due to the definition of $\widehat{X_A}$. 
The limit is because $\Var{\widehat{\frac{1}{1-P_0^A}}} \to 0$ and $\Var{\widehat{(1-P_0^\cup)}} \to 0$. The last equality follows Lemmas \ref{lemma1minusP} and \ref{lemmaSamples}.
In total, we get that 
\begin{equation}\label{eq:dis}
\widehat{X_A} \to \Normal{\frac{\abs{A}}{\abs{A\cup B}}}{(\frac{1-P_0^\cup}{1-P_0^A})^2\cdot \frac{1}{f}(d_A(1-d_A))} \text{.}
\end{equation} 
The same holds symmetrically for $\widehat{X_B}$. 

We now consider $\widehat{\rho} = \widehat{X_A} + \widehat{X_B} - 1$.
Note that $\widehat{X_A}$ and $\widehat{X_B}$ are dependent variables. In Lemma \ref{lem:cov} we prove that $\Cov{\widehat{X_A}}{\widehat{X_B}} = \frac{(1-P_0^\cup)^2}{f(1-P_0^A)(1-P_0^B)}(\frac{\abs{S_\cap}}{\abs{S_\cup}} - \frac{\abs{S_A}\abs{S_B}}{\abs{S_\cup}^2})$.
For the expectation we get
\begin{align*}
\Ex{\widehat{\rho}} &= \Ex{\widehat{X_A}} + \Ex{\widehat{X_B}} - 1 \\
&= \frac{\abs{A}}{\abs{A\cup B}} + \frac{\abs{B}}{\abs{A\cup B}} - 1 \\
&= \frac{\abs{A\cap B}}{\abs{A\cup B}} = \rho\text{.}
\end{align*}
It follows that $\widehat{\rho}$ is an unbiased estimator for $\rho$. For the variance we get
\begin{align}\label{eq:varBIG}
\Var{\widehat{\rho}} &= \Var{\widehat{X_A}} + \Var{\widehat{X_B}} +2\Cov{\widehat{X_A}}{\widehat{X_B}} \nonumber \\
&= \frac{(1-P_0^\cup)^2}{f}(Z_A + Z_B)  \nonumber \\
&+ \frac{2(1-P_0^\cup)^2}{f(1-P_0^A)(1-P_0^B)}(\frac{\abs{S_\cap}}{\abs{S_\cup}}-\frac{\abs{S_A}\abs{S_B}}{\abs{S_\cup}^2})\text{,}
\end{align}
where $Z_A = \frac{\abs{A}}{\abs{S_\cup}(1-P_0^A)} - (\frac{\abs{A}}{\abs{S_\cup}})^2$, and similarly for $B$. The first equality is due to variance properties and the second equality follows from Eq. (\ref{eq:dis}) and Lemma \ref{lem:cov}.

In total we obtain that $\widehat{\rho} \to \Normal{\rho}{V_1}$, where $V_1$ is as stated in Eq. (\ref{eq:varBIG}).
Applying Lemma \ref{lemma:productDis} on the independent variables $\widehat{\rho}$ and $\widehat{\abs{A \cup B}}$ concludes the proof.
\end{proof}

Similarly, for set difference we get:
\begin{thm} \ \\
\label{thmDiff}
Algorithm \ref{alg:diffEstWithSampling} estimates $n=\abs{A\setminus B}$ with mean value $n$ and variance $V$, namely, $\widehat{\abs{A\cap B}} \to \Normal{n}{V}$, where
\begin{equation*}
V = \frac{n^2}{m} + \frac{n^2}{u}\frac{P_{0,1}^U}{(1-P_0^U)^2} + \frac{1}{f}(\frac{\abs{B}\abs{S_\cup}}{1-P_0^B} - \abs{B}^2)\text{,}
\end{equation*}
where $f$ is as stated in Theorem \ref{thmInter}.
\end{thm}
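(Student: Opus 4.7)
The plan is to parallel the proof of Theorem \ref{thmInter}, exploiting the fact that Algorithm \ref{alg:diffEstWithSampling} returns $\widehat{\abs{A\setminus B}} = \widehat{\abs{A\cup B}}\cdot \widehat{\rho_>}$, where by Eq. (\ref{eq:jacDiffEst2}) one has $\widehat{\rho_>} = 1 - \widehat{X_B}$ with $\widehat{X_B} = \widehat{\frac{1}{1-P_0^B}}\cdot\widehat{d_B}\cdot\widehat{(1-P_0^\cup)}$. The decisive simplification compared to Theorem \ref{thmInter} is that only one of the two ``$X$-terms'' appears, so the covariance contribution treated in Lemma \ref{lem:cov} drops out entirely.

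First I would invoke the asymptotic distribution of $\widehat{X_B}$ that was established inside the proof of Theorem \ref{thmInter} (the symmetric analogue of Eq. (\ref{eq:dis})), giving $\widehat{X_B} \to \Normal{\abs{B}/\abs{A\cup B}}{(\frac{1-P_0^\cup}{1-P_0^B})^2 \cdot \frac{1}{f}\,d_B(1-d_B)}$. Since an additive constant shifts the mean but leaves the variance unchanged, this immediately yields that $\widehat{\rho_>} = 1 - \widehat{X_B}$ is asymptotically normal with mean $1 - \abs{B}/\abs{A\cup B} = \abs{A\setminus B}/\abs{A\cup B} = \rho_>$ (hence unbiasedness of $\widehat{\rho_>}$) and variance equal to $\Var{\widehat{X_B}}$.

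Next I would combine $\widehat{\rho_>}$ with $\widehat{\abs{A\cup B}}$ via Lemma \ref{lemma:productDis}. The two factors depend on disjoint randomness ($\widehat{\rho_>}$ through $\text{MTS}_2$, $\widehat{\abs{A\cup B}}$ through $\text{MTS}_1$), so the covariance is zero and the lemma applies, giving
\begin{equation*}
V = \abs{A\cup B}^2 \Var{\widehat{\rho_>}} + \rho_>^2 \Var{\widehat{\abs{A\cup B}}}.
\end{equation*}
The second summand is handled by Corollary \ref{corUnion}: since $\rho_> \cdot \abs{A\cup B} = n$, it contributes exactly $\frac{n^2}{m} + \frac{n^2}{u}\frac{P_{0,1}^U}{(1-P_0^U)^2}$.

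The main (and only nontrivial) computation is to show that the first summand reduces to $\frac{1}{f}(\frac{\abs{B}\abs{S_\cup}}{1-P_0^B} - \abs{B}^2)$. Here I would use the Good--Turing identities $1-P_0^\cup = \abs{S_\cup}/\abs{A\cup B}$ and $1-P_0^B = \abs{S_B}/\abs{B}$ (so $1/(1-P_0^B) = \abs{B}/\abs{S_B}$) together with $d_B = \abs{S_B}/\abs{S_\cup}$. Substituting these into $\abs{A\cup B}^2 (\frac{1-P_0^\cup}{1-P_0^B})^2 \cdot \frac{1}{f} d_B(1-d_B)$, the factor $\abs{A\cup B}^2$ cancels with $(1-P_0^\cup)^2 = \abs{S_\cup}^2/\abs{A\cup B}^2$, and the bracketed piece becomes $\frac{\abs{B}^2\abs{S_\cup}}{\abs{S_B}} - \abs{B}^2 = \frac{\abs{B}\abs{S_\cup}}{1-P_0^B} - \abs{B}^2$. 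Adding the two contributions yields exactly the claimed $V$, completing the proof. The main obstacle is purely bookkeeping in this algebraic simplification; there is no genuinely new probabilistic content beyond what was already developed for Theorem \ref{thmInter}.
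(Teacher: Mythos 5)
Your proof is correct and takes essentially the same route the paper intends: the paper proves Theorem \ref{thmDiff} only via the remark ``similarly'' following Theorem \ref{thmInter}, and your argument --- writing $\widehat{\rho_>}=1-\widehat{X_B}$, reusing the symmetric analogue of Eq.~(\ref{eq:dis}), observing that the covariance contribution of Lemma \ref{lem:cov} drops out, and combining with Corollary \ref{corUnion} via Lemma \ref{lemma:productDis} --- is precisely that adaptation, with the algebra reducing $\abs{A\cup B}^2\paren{\frac{1-P_0^\cup}{1-P_0^B}}^2\frac{1}{f}d_B(1-d_B)$ to $\frac{1}{f}\paren{\frac{\abs{B}\abs{S_\cup}}{1-P_0^B}-\abs{B}^2}$ checking out exactly. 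One pedantic caveat: your independence justification (``disjoint randomness'') slightly overstates the case, since $\widehat{\abs{A\cup B}}$ also uses $\text{MTS}_2$ through $\widehat{P_0^\cup}$ in Algorithm \ref{alg:cardEstWithSampling}, but the paper makes the identical independence assumption when applying Lemma \ref{lemma:productDis} in the proof of Theorem \ref{thmInter}, so this is not a deviation from the paper's argument.
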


\subsection{Cardinality Estimation of Set Expressions With Sampling for $k>2$ Streams}\label{appendix:bigTk}

We will use the following lemma:

\begin{lemma} \label{lemma:jacard} \ \\
In Eq. (\ref{eq:jacest}), the estimation of $\widehat{\rho}$ is normally distributed with mean $\rho$ and variance $\frac{1}{m}\rho(1-\rho)$; i.e., 
\begin{equation*}
\widehat{\rho} \to \Normal{\rho}{\frac{1}{m}\rho(1-\rho)} \text{.}
\end{equation*} 
The same holds for the estimation of $\rho_>$ and $\rho_<$ according to Eq. (\ref{eq:jacminus}), with the change of $\rho$ to $\rho_>$ and $\rho_<$ respectively.
\end{lemma}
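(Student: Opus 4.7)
The plan is to recognize $\widehat{\rho}$ as an empirical mean of $m$ i.i.d. Bernoulli indicators and then invoke the Central Limit Theorem. Concretely, I would proceed as follows.

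First, I would fix a single hash index $j$ and analyze the distribution of $I_{x_A^j == x_B^j}$. As stated in Section~\ref{sec:sketch} (the Broder MinHash/MaxHash property), the probability that the per-set extremal hash values agree equals the Jaccard similarity: $\Pr(x_A^j = x_B^j) = \rho$. Thus $I_{x_A^j == x_B^j}$ is a Bernoulli$(\rho)$ random variable, with mean $\rho$ and variance $\rho(1-\rho)$. Next, since the hash functions $h_1,\ldots,h_m$ are chosen independently, the indicators $\{I_{x_A^j == x_B^j}\}_{j=1}^m$ are independent. Therefore $\sum_{j=1}^m I_{x_A^j == x_B^j}$ is Binomial$(m,\rho)$, which gives $\Ex{\widehat{\rho}} = \rho$ and $\Var{\widehat{\rho}} = \frac{1}{m}\rho(1-\rho)$. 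Asymptotic normality as $m \to \infty$ follows directly from the classical CLT for i.i.d.\ Bernoulli variables, yielding $\widehat{\rho} \to \Normal{\rho}{\frac{1}{m}\rho(1-\rho)}$.

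For $\rho_>$ and $\rho_<$, the argument is identical once I verify that the analogous indicators are Bernoulli with the correct parameter. The key observation is that $x_A^j > x_B^j$ iff the element attaining the maximum hash value over $A \cup B$ under $h_j$ lies in $A \setminus B$: if that argmax element is in $A \cap B$ then $x_A^j = x_B^j$, and if it is in $B \setminus A$ then $x_A^j < x_B^j$. Since the hash function makes every element of $A \cup B$ equally likely to be the argmax, $\Pr(x_A^j > x_B^j) = \abs{A \setminus B}/\abs{A \cup B} = \rho_>$. Symmetrically $\Pr(x_A^j < x_B^j) = \rho_<$. Plugging these Bernoulli parameters into the same i.i.d.\ sum / CLT argument gives the two remaining statements.

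The proof is essentially routine; there is no real obstacle. The only step that requires a little care (but no calculation) is the identification of the event $\{x_A^j > x_B^j\}$ with the event that the global argmax over $A \cup B$ lies in $A \setminus B$, which is what justifies the Bernoulli parameter $\rho_>$ in the difference case. Everything else reduces to properties of the Binomial distribution and the standard CLT.
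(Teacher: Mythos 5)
Your proof is correct and follows essentially the same route as the paper: identify each indicator as a Bernoulli variable with success probability given by the Broder argmax argument (the extremal element of $A \cup B$ landing in $A \cap B$, resp.\ $A \setminus B$), note the sum is Binomial$(m,\rho)$ by independence of the hash functions, and apply the CLT. Your only addition is to spell out the $\rho_>$ case, which the paper dismisses with ``the proof is similar,'' and your identification of $\set{x_A^j > x_B^j}$ with the argmax lying in $A \setminus B$ is exactly the right justification.
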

\begin{proof} \ \\
We prove the lemma for $\rho$. The proof for $\rho_>$ and $\rho_<$ is similarly.
According to \cite{Broder97new}, for the $j$'th hash function the following holds:
\begin{equation}
\label{eq:pro}
\Prob{x_A^j=x_B^j}=\frac{\abs{A \cap B}}{\abs{A \cup B}} \text{.}
\end{equation}
The intuition is to consider the hash function $h_j$ and define $m(S)$, for every set $S$, as the element in $S$ whose hash value for $h_j$ is maximum; i.e., $h_j(m(S)) = x_S^j$. Then, $m(A)=m(B)$ holds only when $m(A \cup B)$ lies in $A \cap B$. The probability for this is the Jaccard ratio $\rho$, and therefore $\Prob{x_A^j = x_B^j} = \rho$.

From Eqs. (\ref{eq:jacest}) and (\ref{eq:pro}) follows that $\widehat{\rho}$ is a sum of $m$ Bernoulli variables. Therefore, it is binomially distributed, and can be asymptotically approximated to normal distribution as $m\to \infty$; namely, 
\begin{equation*}
\widehat{\rho} = \frac{\sum_{l=1}^{m}I_{x_A^j=x_B^j}}{m} \to \Normal{\rho}{\frac{1}{m}\rho(1-\rho)} \text{.}
\end{equation*}
\end{proof}

Finally, the following theorem states the asymptotic statistical performance of Algorithm \ref{alg:generalEstWithSampling}:

\begin{thm} \ \\
\label{thmKbigT2}
Algorithm \ref{alg:generalEstWithSampling} estimates $n=\abs{A_1 (\star 1) A_2 (\star 2) \ldots (\star (k-1)) A_k}$ with mean value $n$ and variance $\frac{n^2}{g}\cdot\frac{P_{0,1}^X}{(1-P_0^X)^2} + \frac{n \cdot \abs{S_\cup}}{m(1-P_0^X)}$, namely, $\widehat{n} \to \Normal{n}{\frac{n^2}{g}\cdot\frac{P_{0,1}^X}{(1-P_0^X)^2} + \frac{n \cdot \abs{S_\cup}}{m(1-P_0^X)}}$, \\
where $X = A_1 (\star 1) A_2 (\star 2) \ldots (\star (k-1)) A_k$ is the full (unsampled) stream and $g$ is the length of the subsample stream, as described in Procedure \ref{proc:gen}.
\end{thm}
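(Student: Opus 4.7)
The plan is to follow the template used for Theorem \ref{thmProb1Samp}: write $\widehat{n}$ returned by Algorithm \ref{alg:generalEstWithSampling} as a product of three factors, $\widehat{n} = \widehat{\rho_G}\cdot \widehat{|S_\cup|}\cdot \widehat{1/(1-P_0^X)}$, establish an asymptotic normal distribution for each, and then chain two applications of Lemma \ref{lemma:productDis}. To keep the calculation clean I would first collapse the two factors that come from $\text{MTS}_1$ (namely $\widehat{\rho_G}$ and $\widehat{|S_\cup|}$) into an estimator for $n_s = \rho_G\cdot|S_\cup|$, and only afterwards combine with the Good--Turing factor $\widehat{1/(1-P_0^X)}$ that comes from $\text{MTS}_2$.

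For the first stage, Lemma \ref{lemma:jacard} generalizes directly to $\widehat{\rho_G}$: the indicator $I_j$ in Eq.~(\ref{eq:estRG}) is Bernoulli with success probability $\rho_G = n_s/|S_\cup|$, because for hash function $h_j$ the element realizing the maximum over $S_\cup$ lies in $Y$ with exactly that probability. Hence $\widehat{\rho_G}\to \Normal{\rho_G}{\rho_G(1-\rho_G)/m}$. Lemma \ref{lemma:hyper}, applied to the sampled union stream, gives $\widehat{|S_\cup|}\to\Normal{|S_\cup|}{|S_\cup|^2/m}$. Invoking Lemma \ref{lemma:productDis} and simplifying $|S_\cup|^2\rho_G(1-\rho_G)/m + \rho_G^2|S_\cup|^2/m = |S_\cup|^2\rho_G/m = n_s|S_\cup|/m$, I obtain $\widehat{n_s}\to\Normal{n_s}{n_s|S_\cup|/m}$.

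For the second stage, Procedure \ref{proc:gen} is a direct generalization of Procedure \ref{proc:p0} applied to the subsample $U_X$ of length $g$, so Lemma \ref{lemma1minusP} carries over verbatim and yields $\widehat{1/(1-P_0^X)}\to\Normal{1/(1-P_0^X)}{P_{0,1}^X/(g(1-P_0^X)^4)}$. Since $\widehat{n_s}$ is built from $\text{MTS}_1$ and $\widehat{1/(1-P_0^X)}$ is built from $\text{MTS}_2$, the two are (asymptotically) independent, so Lemma \ref{lemma:productDis} applies again. The mean is $n_s/(1-P_0^X) = n$; the variance is
\begin{equation*}
\frac{1}{(1-P_0^X)^2}\cdot\frac{n_s|S_\cup|}{m} + n_s^2\cdot\frac{P_{0,1}^X}{g(1-P_0^X)^4},
\end{equation*}
and substituting $n_s = n(1-P_0^X)$ immediately reduces this to the expression claimed in the theorem.

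The main obstacle is the independence used inside the first product: both $\widehat{\rho_G}$ and $\widehat{|S_\cup|}$ read from the same $m$-dimensional $\text{MTS}_1$ sketch, so strictly speaking their joint distribution is not a product. I would handle this the same way it was handled implicitly in the $k=2$ analyses of Sections~\ref{appendix:interANDdiff}: argue that under the bucketed HyperLogLog construction the equality indicator $I_j$ (controlled only by \emph{which} stream attains the per-bucket maximum) and the register value $C_j$ (controlled by the \emph{magnitude} of that maximum) have covariance $o(1/m)$, so the cross term falls below the $O(1/m)$ leading order of $\Var{\widehat{n_s}}$ and may safely be dropped. Once this decoupling is accepted, the rest of the argument is routine application of the two lemmas above.
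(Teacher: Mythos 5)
Your proposal follows essentially the same route as the paper's own proof: the same decomposition $\widehat{n} = \widehat{\rho_G}\cdot\widehat{\abs{S_\cup}}\cdot\widehat{1/(1-P_0^X)}$, the same ingredient distributions (Lemma \ref{lemma:jacard} generalized to $\rho_G$, Corollary \ref{cor:union} for $\abs{S_\cup}$, Lemma \ref{lemma1minusP} for the Good--Turing factor), and the same two chained applications of Lemma \ref{lemma:productDis} with identical algebra, including the simplification to $n_s\abs{S_\cup}/m$ and the final substitution $n_s = n(1-P_0^X)$. Your closing remark about the dependence between $\widehat{\rho_G}$ and $\widehat{\abs{S_\cup}}$ is in fact more careful than the paper, which simply asserts their independence without the $o(1/m)$ covariance argument you sketch.
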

\begin{proof}\ \\
The estimator is $\widehat{n} = \widehat{\rho_G}\cdot\widehat{\abs{S_\cup}}\cdot \widehat{\frac{1}{1-P_0^X}}$, and recall that $\rho_G = \frac{n_s}{\abs{S_\cup}}$.
\begin{enumerate}
\item According to Lemma \ref{lemma:jacard}, $\widehat{\rho_G} \to \Normal{\rho_G}{\frac{1}{m}\rho_G(1-\rho_G)}$.
\item According to Corollary \ref{cor:union}, $\widehat{\abs{S_\cup}} \to \Normal{\abs{S_\cup}}{\frac{\abs{S_\cup}^2}{m}}$.
\end{enumerate}
Consider the product $\widehat{\rho_G}\cdot\widehat{\abs{S_\cup}}$. 
Then, according to Lemma \ref{lemma:productDis} and because the variables are independent, we get that 
\begin{equation*}
\Ex{\widehat{\rho_G}\cdot\widehat{\abs{S_\cup}}}=\rho_G\cdot \abs{S_\cup} = n_s
\end{equation*}
and
\begin{align*}
\Var{\widehat{\rho_G}\cdot\widehat{\abs{S_\cup}}} &=\rho_G^2\cdot \frac{\abs{S_\cup}^2}{m} + \frac{1}{m}\rho_G(1-\rho_G)\cdot \abs{S_\cup}^2 \nonumber \\
&= \frac{n_s^2}{m} + \frac{1}{m}(\rho_G\abs{S_\cup}^2-n_s^2) \nonumber \\
&= \frac{n_s\cdot \abs{S_\cup}}{m}\text{.}
\end{align*}
In total, we get that $\widehat{\rho_G}\cdot\widehat{\abs{S_\cup}} \to \Normal{n_s}{\frac{n_s\cdot \abs{S_\cup}}{m}}$.

Denoting $T=\widehat{\rho_G}\cdot\widehat{\abs{S_\cup}}$, the estimator is $\widehat{n}=T\cdot \widehat{\frac{1}{1-P_0^X}}$. Thus, we are left with the final term in the estimator $\widehat{\frac{1}{1-P_0^X}}$.
According to Lemma \ref{lemma1minusP}, $\widehat{\frac{1}{1-P_0^X}} \to \Normal{\frac{1}{1-P_0^X}}{\frac{1}{g}\frac{P_{0,1}^X}{(1-P_0^X)^4}}$.  Then, according to Lemma \ref{lemma:productDis} and because the variables are independent, we get that
\begin{equation*}
\Ex{\widehat{n}} = \Ex{T\cdot \widehat{\frac{1}{1-P_0^X}}}=n_s\cdot \frac{1}{1-P_0^X} = n
\end{equation*}
and
\begin{align*}
\Var{\widehat{n}} &= \Var{T \widehat{\frac{1}{1-P_0^X}}} \nonumber \\
&= n_s^2\cdot (\frac{1}{g}\frac{P_{0,1}^X}{(1-P_0^X)^4}) + \frac{n_s\cdot \abs{S_\cup}}{m}\cdot \frac{1}{(1-P_0^X)^2} \nonumber \\
&= \frac{n^2}{g}\cdot\frac{P_{0,1}^X}{(1-P_0^X)^2} + \frac{n \cdot \abs{S_\cup}}{m(1-P_0^X)} \text{.}
\end{align*}
The last equality is due to $n=n_s\cdot \frac{1}{1-P_0^X}$, which follows from Good-Turing.
\end{proof}

\end{document}